\documentclass[12pt]{article}
\usepackage[english]{babel}
\usepackage{amssymb,amsmath,graphicx}
\usepackage{amsthm}
\usepackage{float}
\usepackage{afterpage}
\usepackage{dcolumn}
\usepackage{multirow}
\usepackage{color}
\usepackage[numbers]{natbib}
\usepackage{enumitem}
\usepackage{slashbox}
\usepackage{pict2e}
 \usepackage{setspace}
\usepackage{siunitx} 
\sisetup{scientific-notation=true,round-mode=places,round-precision=2}
\sisetup{detect-all}

\usepackage[noend]{algorithmic}
\usepackage{algorithm}
\usepackage{rotating}

\newtheorem{theorem}{Theorem}
\newtheorem{lemma}{Lemma}

\newtheorem{corollary}{Corollary}

\oddsidemargin=0.1in
\topmargin=-0.4in
\textwidth=6.25in
\textheight=8.85in
\parskip=0pt

\usepackage[table]{xcolor}
\usepackage{float}
\usepackage{color} 
\usepackage{multirow}
\usepackage{graphicx}

\usepackage{flafter}

\def\sqw{\hbox{\rlap{\leavevmode\raise.3ex\hbox{$\sqcap$}}$%
\sqcup$}}
\def\sqb{\hbox{\hskip5pt\vrule width4pt height6pt depth1.5pt%
\hskip1pt}}

\def\qed{\ifmmode\hbox{\hfill\sqb}\else{\ifhmode\unskip\fi%
\nobreak\hfil
\penalty50\hskip1em\null\nobreak\hfil\sqb
\parfillskip=0pt\finalhyphendemerits=0\endgraf}\fi}
\def\cqfd{\ifmmode\sqw\else{\ifhmode\unskip\fi\nobreak\hfil
\penalty50\hskip1em\null\nobreak\hfil\sqw
\parfillskip=0pt\finalhyphendemerits=0\endgraf}\fi}

\usepackage{algorithmic}
\usepackage{algorithm}

\title{A Decomposition Algorithm for Nested Resource Allocation Problems}

\author{Thibaut Vidal, Patrick Jaillet, Nelson Maculan}

\begin{document}

\begin{center}

\begin{LARGE}
A Decomposition Algorithm for Nested Resource Allocation Problems
\end{LARGE}

\vspace*{0.85cm}

\textbf{Thibaut Vidal *} \\
Laboratory for Information and Decision Systems, Massachusetts Institute of Technology, Cambridge, MA, United States  \\
vidalt@mit.edu \\
\vspace*{0.2cm}
\textbf{Patrick Jaillet} \\
Department of Electrical Engineering and Computer Science, Laboratory for Information and Decision Systems, Operations Research Center, Massachusetts Institute of Technology, Cambridge, MA, United States  \\
jaillet@mit.edu \\
\vspace*{0.2cm}
\textbf{Nelson Maculan} \\
COPPE - Systems Engineering and Computer Science, Federal University of Rio de Janeiro,  Brazil  \\
maculan@cos.ufrj.br \\

\vspace*{0.9cm}

\begin{large}
Working Paper, MIT -- April 2014
\end{large}

\vspace*{0.6cm}

\end{center}
\noindent
\textbf{Abstract.}
We propose an exact polynomial algorithm for a resource allocation problem with convex costs and constraints on partial sums of resource consumptions, in the presence of either continuous or integer variables. No assumption of strict convexity or differentiability is needed.
The method solves a hierarchy of resource allocation subproblems, whose solutions are used to convert constraints on sums of resources into bounds for separate variables at higher levels. The resulting time complexity for the integer problem is $O(n \log m \log (B/n))$, and the complexity of obtaining an $\epsilon$-approximate solution for the continuous case is $O(n \log m  \log (B/\epsilon))$, $n$ being the number of variables, $m$ the number of ascending constraints (such that $m < n$), $\epsilon$ a desired precision, and $B$ the total resource. This algorithm attains the best-known complexity when $m = n$, and improves it when $\log m = o(\log n)$.
Extensive experimental analyses are conducted with four recent algorithms on various continuous problems issued from theory and practice. The proposed method achieves a higher performance than previous algorithms, addressing all problems with up to one million variables in less than one minute on a modern computer.

\vspace*{0.6cm}

\noindent
\textbf{Keywords.}  Separable convex optimization, resource allocation, nested constraints, project crashing, speed optimization, lot sizing

\vspace*{0.6cm}

\noindent
* Corresponding author

\newpage


\section{Problem statement}
\label{s1}
Consider the minimization problem
(\ref{RAP:1}-\ref{RAP:3bis}), with either integer or continuous variables.
Functions $f_i : [0,d_i] \rightarrow  \Re$, $i \in \{1, \dots, n\}$ are proper convex (but not necessarily strictly convex or differentiable). Let $(s[1],\dots,s[m])$ be a subsequence of $m \geq 1$ integers in $\{1,\dots,n\}$ such that $s[m] = n$. The parameters $a_i$, $d_i$ and $B$ are positive integers. For presentation convenience, define $a_0 = 0$, $a_{m} = B$, $s[0] = 0$, $y_i = \sum_{k=1}^{s[i]} x_k$ and $\alpha_i = a_i - a_{i-1}$ for $i \in \{1,\dots,m\}$. 
\begin{align}
\min  \hspace*{0.5cm}  f(\mathbf{x}) =  &\sum\limits_{i=1}^{n}  f_i(x_i) & \label{RAP:1} \\
\text{s.t.}  \hspace*{0.3cm}   \sum\limits_{k=1}^{s[i]}  x_k  \leq & \  a_i   &i  \in \{ 1,\dots,m-1\} \label{RAP:2} \\
 \sum\limits_{i=1}^{n}  x_i =& \ B & \label{RAP:3}  \\
  0 \leq x_i \leq &  \ d_i & i  \in \{ 1,\dots,n\} \label{RAP:3bis}
\end{align}

The problem (\ref{RAP:1}-\ref{RAP:3bis}) appears prominently in a variety of applications related to project crashing \citep{Talbot1982},
production and resource planning \citep{Bellman1954,Bellman1962,Veinott1964}, lot sizing \citep{Tamir1980},
assortment with downward substitution \citep{Hanssmann1957,Sadowski1959,Pentico2008}
departure-time optimization in vehicle routing \citep{Hashimoto2006},
vessel speed optimization \citep{Norstad2010}, 
and telecommunications \citep{Padakandla2009a}, among many others.
When $m = n$, and thus $s[i] = i$ for $i \in \{1,\dots,n\}$ this problem is known as the resource allocation problem with nested constraints (NESTED). In the remainder of this paper, we maintain the same name for any $m \geq 2$. Without the constraints (\ref{RAP:2}), the problem becomes a resource allocation problem (RAP), surveyed in \citep{Ibaraki1988,Patriksson2008}, which is the focus of numerous papers related to search-effort allocation, portfolio selection, energy optimization, sample allocation in stratified sampling, capital budgeting, mass advertising, and matrix balancing, among many others. The RAP can also be solved by  cooperating agents under some mild assumption on functions $f_i$ \citep{Lakshmanan2008}.

In this paper, we proposed efficient polynomial algorithms for both the integer and continuous version of the problem.
Computational complexity concepts are well-defined for linear problems. In contrast, with the exception of seminal works such as \citep{Nemirovsky1983,Monteiro1990,Hochbaum1990,Hochbaum1994} the complexity of algorithms for general non-linear optimization problems is more rarely discussed in the literature, mostly due to the fact that an infinite output size may be needed due to real optimal solutions. To circumvent this issue, we assume the existence of an oracle which returns the value of $f_i(x)$ for any $x$ in a constant number of operations, and rely on an approximate notion of optimality for non-linear optimization problems~\citep{Hochbaum1990}. A solution $\mathbf{x^{(\epsilon)}}$ of a continuous problem is $\epsilon$-accurate if and only there exists an optimal solution $\mathbf{x^*}$ such that $||(\mathbf{x^{(\epsilon)}}-\mathbf{x^*})||_{\infty} \leq \epsilon$. This accuracy is defined in the solution space, in contrast with some other approximation approaches which considered objective space~\citep{Nemirovsky1983}. 

Two main classes of methods can be discerned for NESTED, considering $m=n$.
The two algorithms of Padakandla and Sundaresan \citep{Padakandla2009a} and Wang \citep{Wang2012} can be qualified as \emph{dual}, because they resolve explicitly a succession of RAP sub-problems, assimilated to a single Lagrangian equation, identify and iteratively re-introduce the active nested constraints (\ref{RAP:2}).
These methods attain a complexity of $O(n^2 \Phi_\textsc{Rap}(n,B))$ and $O(n^2 \log n + n\Phi_\textsc{Rap}(n,B))$ for the continuous case, respectively. $\Phi_\textsc{Rap}(n,B)$ is the complexity of solving one RAP with $n$ tasks.  It should be noted that the performance of \citep{Padakandla2009a} can be improved for some specific continuous problems in which the Lagrangian equation admits a closed and additive form. Otherwise, the RAP are solved by a combination of Newton-Raphson and bisection search on the Lagrangian dual. A more precise computational complexity statement for these algorithms would require to describe the complexity of these sub-procedures and the approximation allowed at each step.
Similar methods have also been discussed, albeit with a different terminology, in early production scheduling and lot sizing literature \citep{Modigliani1955}.

Another class of methods, that we classify as \emph{primal}, was primarily designed for the integer version of the problem, but also applies to the continuous case. These methods take inspiration from greedy algorithms, which consider all feasible increments of one resource, and select the least-cost one. The greedy method is known to converge \citep{Federgruen1986} to the optimum of the integer problem when the constraints determining a polymatroid. Dyer and Walker \citep{Dyer1987} thus combine the greedy approach with divide-and-conquer using median search, achieving a complexity of $O(n \log n \log^2 \frac{B}{n})$ in the integer case.
More recently, Hochbaum \citep{Hochbaum1994} combines the greedy algorithm within a scaling approach. An initial problem is solved with large increments, and the increment size is iteratively divided by two to achieve higher accuracy. At each iteration, and for each variable, only one increment from the previous iteration may require to be corrected. Using efficient feasibility checking methods, NESTED can be solved in $O(n \log n \log \frac{B}{n})$. The method can also be applied to the general allocation problem as long as the constraints determine a polymatroid \citep{Hochbaum1994}.

Finally, without constraints (\ref{RAP:2}), the RAP can be solved in $O(n \log \frac{B}{n})$ \citep{Frederickson1982,Hochbaum1994}. This complexity is the best possible \citep{Hochbaum1994} in the comparison model and the algebraic tree model with operations $+,-,\times,\div$.




\section{Contributions}
\label{contrib}

This paper introduces a new algorithm for NESTED, with a complexity of $O(n \log m \log \frac{B}{n})$ in the integer case, and $O(n \log m \log \frac{B}{\epsilon})$ in the continuous case. This is a \emph{dual}-inspired approach, which solves NESTED as a succession of RAP sub-problems as in \citep{Padakandla2009a,Wang2012}. It is the first method of this kind to attain the same best known complexity as \citep{Hochbaum1994} when $m=n$.
In addition, the complexity of the proposed method grows partly with the number of constraints $m$ rather than the number of variables $n$ in \citep{Hochbaum1994}, such that the proposed approach is the fastest known for problems with sparse constraints when $\log m = o(\log n)$.
In the presence of a quadratic objective, the proposed algorithm attains a complexity of $O(n \log m)$, smaller than the previous complexity of $O(n \log n)$ \citep{Hochbaum1995}.

Extensive experimental analyses are conducted to compare our method with previous algorithms, using the same testing environment, on eight problem families with $n$ and $m$ ranging from 10 to 1.000.000.
In practice, the proposed method demonstrates a higher performance than \citep{Hochbaum1994} even when $m = n$, possibly due to the use of very simple data structures. The CPU time is largely smaller than \citep{Padakandla2009a, Wang2012} and the interior point method of MOSEK. All problems with up to one million variables are solved in less than one minute on a modern computer. The method is suitable for large scale problems, e.g. in image processing and telecommunications, or for repeated use when solving combinatorial optimization problems with a resource allocation sub-structure.

Our experiments also show that few nested constraints (\ref{RAP:2}) are usually active in optimal solutions for the considered benchmark instances. In fact, we effectively demonstrate that the expected number of active constraints grows logarithmically with $m$ for some classes of randomly-generated problems. As a corollary, we also highlight a strongly polynomial algorithm for a significant subset of problems.

\section{The proposed algorithm}
\label{algo}

The proposed method performs two simple initialization steps (Algorithm \ref{general-algo}) to guarantee feasibility and then calls upon the main recursive resolution procedure \textsc{Nested}$(1,m)$  (Algorithm~\ref{algo-subprob}). The variables $\mathbf{x},\mathbf{\bar{c}},\mathbf{\bar{d}}$ are global to all functions.
The recursive process results in a hierarchical resolution of NESTED sub-problems, with $1 + \lceil \log m \rceil$ levels of recursion. As will be proven in the following, \textsc{Nested}($v,w$) returns a feasible optimal solution of the sub-problem (\ref{nested1}) associated to the variable range $(x_{s[v-1]+1},\dots,x_{s[w]})$ in the original problem, assuming that nested constraints $v-1$ and $w$ are active,  i.e.  $y_{v-1} = \bar{a}_{v-1}$ and  $y_{w} = \bar{a}_{w}$.

\begin{algorithm}[htbp]
\caption{General solution procedure}
\label{general-algo}
\begin{algorithmic}[1]
\begin{small}
\STATE \textsc{Tightening:}
\STATE  $\bar{a}_0  \gets 0$ ; $\bar{a}_{m} \gets B$
\FOR{$i = 1$ \text{to} $m-1$}
\STATE $\bar{a}_i \gets \min \{\bar{a}_{i-1} + \sum_{k=s[i-1]+1}^{s[i]} d_{k}, \bar{a}_{i} \}$
\ENDFOR
\STATE \textsc{Feasibility:} 
\IF{$\exists \ i \in \{1,\dots,m\}$ such that $\sum_{k=s[i-1]+1}^{n} d_k < B - \bar{a}_{i-1}$}
\STATE \textbf{return Infeasible}
\ENDIF
\STATE \textsc{Hierarchical Resolution:} 
\STATE $(\bar{c}_1,\dots,\bar{c}_n) \gets  (0,\dots,0)$
\STATE $(\bar{d}_1,\dots,\bar{d}_n) \gets  (d_1,\dots,d_n)$
\STATE $(x_1,\dots,x_n) \gets  \textsc{Nested}(1,m)$
\STATE  \textbf{return} $(x_1,\dots,x_n)$
\end{small}
\end{algorithmic}
\end{algorithm}

When $v=w$, the problem \textsc{Nested}$(v,v)$ is a RAP which admits a feasible solution (Lemma \ref{p0}). At each level, an optimal solution to \textsc{Nested}($v,w$) is obtained by recursively solving two subproblems \textsc{Nested}($v,t$) and \textsc{Nested}($t+1,w$) and then solving a modified \textsc{Rap}($v,w$) with an updated range for $\mathbf{x}$.

\begin{algorithm}[htbp]
\caption{$\textsc{Nested}(v,w)$}
\label{algo-subprob}
\begin{algorithmic}[1]
\begin{small}
\IF{$v = w$}
\STATE $(x_{s[v-1]+1},\dots,x_{s[v]}) \gets \textsc{Rap}(v,v)$
\ELSE 
\STATE $t  \gets  \lfloor \frac{v+w}{2} \rfloor$
\STATE $(x_{s[v-1]+1},\dots,x_{s[t]}) \gets \textsc{Nested}(v,t) $
\STATE $(x_{s[t]+1},\dots,x_{s[w]}) \gets \textsc{Nested}(t+1,w) $
\FOR{$i = s[v-1]+1$ to $s[t]$}
\STATE $(\bar{c}_i,\bar{d}_i)  \gets (0,x_i)$
\ENDFOR
\FOR{$i = s[t]+1$ to $s[w]$}
\STATE $(\bar{c}_i,\bar{d}_i)  \gets (x_i,d_i)$
\ENDFOR
\STATE $(x_{s[v-1]+1},\dots,x_{s[w]}) \gets \textsc{Rap}(v,w)$
\ENDIF
\end{small}
\end{algorithmic}
\end{algorithm}

\begin{small}
\begin{align}
\textsc{Nested}(v,w)
&\left\{
\begin{aligned}
\min  \hspace*{0.2cm}  &\sum_{i=s[v-1]+1}^{s[w]}  f_i(x_i) \\
\text{s.t.} \hspace*{0.2cm}  & \sum_{k=s[v-1]+1}^{s[i]}  x_k  \leq \  \bar{a}_i - \bar{a}_{v-1}   && \hspace*{0.5cm}  i  \in \{ v,\dots,w-1\}  \\
& \sum_{i=s[v-1]+1}^{s[w]} x_i = \ \bar{a}_w -  \bar{a}_{v-1}   \\
&  0 \leq x_i \leq  \ d_i &&  \hspace*{0.5cm} i  \in \{ s[v-1]+1,\dots,s[w]\}
\end{aligned} 
\right. \label{nested1} \\
&\nonumber \\ 
\textsc{Rap}(v,w)
&\left\{
\begin{aligned}
\min  \hspace*{0.2cm}  &\sum_{i=s[v-1]+1}^{s[w]}  f_i(x_i) \\
\text{s.t.} \hspace*{0.2cm} &\sum_{i=s[v-1]+1}^{s[w]} x_i = \ \bar{a}_w -  \bar{a}_{v-1}   \\
&  \hat{c}_i \leq x_i \leq  \ \hat{d}_i &&  \hspace*{0.5cm}  i  \in \{ s[v-1]+1,\dots,s[w]\}
\end{aligned} 
\right. \label{rap1}
\end{align}
\end{small}

The rationale of this transformation from NESTED to RAP is explained below.

The optimal solutions $(x^{\downarrow*}_{s[v-1]+1},\dots, x^{\downarrow*}_{s[t]})$ and $(x^{\uparrow*}_{s[t]+1},\dots, x^{\uparrow*}_{s[w]})$ of \textsc{Nested}($v,t$) and \textsc{Nested}($t+1,w$) are such that the $(v-1)^\text{th}$, $t^\text{th}$ and $w^\text{th}$ nested constraints are active ($y_{v-1} = \bar{a}_{v-1}$, $y_{t} =  \bar{a}_t$, $y_{w} =  \bar{a}_w$) and the other nested constraints are satisfied.
Now, the optimal solution $(x^{**}_{s[v-1]+1},\dots, x^{**}_{s[w]})$ of \textsc{Nested}($v,w$) is such that $y_{t} \leq \bar{a}_t$. As a consequence, Theorems \ref{p1} and \ref{p1bis} show that $x^{**}_i \leq x^{\downarrow*}_i$ for $i \in \{s[v-1]+1,\dots,s[t]\}$
and $x^{**}_i \geq x^{\uparrow*}_i$ for $i \in \{s[t]+1,\dots,s[w]\}$.
These valid inequalities can be inserted in the formulation. These inequalities alone also guarantee that nested constraints are satisfied (Corollary \ref{p2}). Nested constraints can thus be eliminated, leading to a \textsc{Rap}$(v,w)$ with updated bounds which can be efficiently solved.
As a consequence, a feasible optimal solution of \textsc{Nested}$(v,w)$ is obtained at each level, leading to an optimal solution of \textsc{Nested}$(1,m)$. \\

\begin{lemma}
\label{p0}
\textsc{Rap}$(v,v)$ admits a feasible solution. 
\end{lemma}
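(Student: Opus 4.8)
The plan is to reduce \textsc{Rap}$(v,v)$ to a single numerical condition on its budget, via the elementary feasibility criterion for resource allocation: a problem $\min \sum_i f_i(x_i)$ subject to $\sum_i x_i = R$ and $c_i \leq x_i \leq d_i$ (over the reals, or over the integers with $R$ and the bounds integral) is feasible if and only if $\sum_i c_i \leq R \leq \sum_i d_i$, since from the lower bounds one can always raise the variables towards $R$ without overshooting an upper bound exactly when $R$ lies in that interval. For \textsc{Rap}$(v,v)$ the budget is $R = \bar a_v - \bar a_{v-1}$ over the index set $\{s[v-1]+1,\dots,s[v]\}$, so it remains to show that, for the box bounds actually in force at the time of the call, $0 \leq \bar a_v - \bar a_{v-1} \leq \sum_{k=s[v-1]+1}^{s[v]} d_k$.

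First I would identify those bounds. The call \textsc{Rap}$(v,v)$ is issued only from the base case $v=w$ of Algorithm~\ref{algo-subprob}, and the global vectors $\bar c,\bar d$ are modified only in the \texttt{else}-branch of that algorithm, i.e.\ at recursive nodes \textsc{Nested}$(v',w')$ with $v'<w'$, and only after both recursive children have returned, besides the initialization $\bar c \gets \mathbf{0}$, $\bar d \gets (d_1,\dots,d_n)$ in Algorithm~\ref{general-algo}. Since the recursion is traversed depth-first, any such node whose bound update runs before the leaf \textsc{Nested}$(v,v)$ is visited belongs to a subtree disjoint from that leaf, hence rewrites only indices outside $\{s[v-1]+1,\dots,s[v]\}$; the only nodes acting on those indices are the ancestors of the leaf, whose updates are performed after the leaf has already returned. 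Therefore, when \textsc{Rap}$(v,v)$ runs we still have $\bar c_i = 0$ and $\bar d_i = d_i$ for every $i \in \{s[v-1]+1,\dots,s[v]\}$, and the feasibility condition reads exactly $0 \leq \bar a_v - \bar a_{v-1} \leq \sum_{k=s[v-1]+1}^{s[v]} d_k$.

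The upper inequality follows directly from the two preprocessing steps of Algorithm~\ref{general-algo}. For $v \leq m-1$ the Tightening step sets $\bar a_v \gets \min\{\bar a_{v-1} + \sum_{k=s[v-1]+1}^{s[v]} d_k,\, a_v\}$, whence $\bar a_v - \bar a_{v-1} \leq \sum_{k=s[v-1]+1}^{s[v]} d_k$. For $v=m$ we have $\bar a_m = B$ and $s[m]=n$, and $B - \bar a_{m-1} \leq \sum_{k=s[m-1]+1}^{n} d_k$ is the $i=m$ instance of the Feasibility test, which did not trigger (else the algorithm returns \textbf{Infeasible} before \textsc{Nested}$(1,m)$ is ever called).

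The lower inequality $\bar a_{v-1} \leq \bar a_v$ is the delicate point and is where I expect the real work to lie: it asserts that the tightened thresholds are non-decreasing. I would argue by induction, with base $\bar a_0 = 0 \leq \bar a_1$ (a minimum of positive integers), and for $v \geq 2$ observe that $\bar a_v = \min\{\bar a_{v-1} + (\text{something positive}),\, a_v\} \geq \bar a_{v-1}$ as soon as $a_v \geq \bar a_{v-1}$, which holds because $\bar a_{v-1} \leq a_{v-1} \leq a_v$; for $v=m$ the analogue is $\bar a_{m-1} \leq a_{m-1} \leq B = \bar a_m$. This uses that the thresholds satisfy $a_1 \leq \cdots \leq a_{m-1} \leq B$, which may be assumed without loss of generality: since $x_k \geq 0$ and $s[i] < s[j]$ imply $\sum_{k=1}^{s[i]} x_k \leq \sum_{k=1}^{s[j]} x_k$, replacing each $a_i$ by $\min\{a_i, a_{i+1}, \dots, a_{m-1}, B\}$ leaves the feasible set of (\ref{RAP:1})--(\ref{RAP:3bis}) unchanged. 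With this normalization both inequalities hold, and hence \textsc{Rap}$(v,v)$ admits a feasible solution.
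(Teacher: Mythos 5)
Your proof is correct and follows essentially the same route as the paper's: both reduce feasibility of \textsc{Rap}$(v,v)$ to the interval condition $0 \leq \bar a_v - \bar a_{v-1} \leq \sum_{k=s[v-1]+1}^{s[v]} d_k$ established by the Tightening and Feasibility steps, and then fill the budget greedily. You are in fact more careful than the paper on two points it leaves implicit --- that the box bounds in force at the leaf call are still $[0,d_i]$, and that the lower inequality $\bar a_{v-1}\leq \bar a_v$ relies on the (normalizable without loss of generality) monotonicity $a_1\leq\cdots\leq a_{m-1}\leq B$ of the original thresholds.
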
 

\begin{proof}
As a consequence of Lines 1 to 8 in Algorithm \ref{general-algo},
$\bar{a}_{v-1} \leq \bar{a}_{v}$ and  $\bar{a}_{v} \leq \bar{a}_{v-1} + \sum_{k=s[v-1]+1}^{s[v]} d_{k}$. 
A feasible solution can then be generated as follows: \vspace*{0.2cm} \\
\hspace*{0.5cm} $\textbf{for} \   i = s[v-1]+1  \ \textbf{to} \ s[v], \ x_i = \min \{d_i,\bar{a}_v - \bar{a}_{v-1} - \sum_{k=s[v-1]+1}^{i-1} x_{k}\}.$
\end{proof}

\begin{theorem}
\label{p1}
Consider $(v,t,w)$ such that $1 \leq v \leq t \leq w \leq m$ and $v < w$.
Let $(x^{\downarrow*}_{s[v-1]+1},\dots, x^{\downarrow*}_{s[t]})$ and $(x^{\uparrow*}_{s[t]+1},\dots, x^{\uparrow*}_{s[w]})$ be optimal solutions of \textsc{Nested}$(v,t)$ and \textsc{Nested}$(t+1,w)$ with integer variables, respectively, then \textsc{Nested}$(v,w)$ with integer variables admits an optimal solution $(x^{**}_{s[v-1]+1},\dots, x^{**}_{s[w]})$ such that 
$x^{**}_i \leq x^{\downarrow*}_i$  for $i \in \{s[v-1]+1,\dots,s[t]\}$ and
 $x^{**}_i \geq x^{\uparrow*}_i$ for $i \in \{s[t]+1,\dots,s[w]\}$. \\
\end{theorem}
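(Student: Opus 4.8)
The plan is to exploit a standard exchange/pooling argument for separable convex minimization. Let $\mathbf{x}^{**}$ be any optimal solution of $\textsc{Nested}(v,w)$; I will transform it, without increasing the objective and without violating any constraint, into one that also satisfies $x_i \leq x^{\downarrow*}_i$ on the left block $L = \{s[v-1]+1,\dots,s[t]\}$ and $x_i \geq x^{\uparrow*}_i$ on the right block $R = \{s[t]+1,\dots,s[w]\}$. The crucial structural fact is that in $\mathbf{x}^{**}$ the $t$-th nested constraint is satisfied, so $\sum_{k\in L} x^{**}_k \leq \bar a_t - \bar a_{v-1} = \sum_{k\in L} x^{\downarrow*}_k$; and since the total over $L\cup R$ is the same constant $\bar a_w - \bar a_{v-1}$ for both $\mathbf{x}^{**}$ and the concatenation $(\mathbf{x}^{\downarrow*},\mathbf{x}^{\uparrow*})$, we get the complementary inequality $\sum_{k\in R} x^{**}_k \geq \sum_{k\in R} x^{\uparrow*}_k$. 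Thus ``too much mass on $L$'' is impossible, and I only have to worry about indices where $x^{**}_i > x^{\downarrow*}_i$ (in $L$) or $x^{**}_i < x^{\uparrow*}_i$ (in $R$).

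Next I would set up the exchange. Suppose some $i\in L$ has $x^{**}_i > x^{\downarrow*}_i$. By the mass inequality on $L$, there must exist $j\in L$ with $x^{**}_j < x^{\downarrow*}_j$. Decrease $x^{**}_i$ by one unit and increase $x^{**}_j$ by one unit (in the integer case; by an infinitesimal in the continuous case). This keeps all box constraints valid (we move $x_i$ down toward $x^{\downarrow*}_i\ge 0$ and $x_j$ up toward $x^{\downarrow*}_j\le d_j$), keeps the total sum over $L\cup R$ fixed, and — because we are only moving mass \emph{within} $L$ and toward the profile $\mathbf{x}^{\downarrow*}$ — does not increase any partial sum $\sum_{k=s[v-1]+1}^{s[r]} x_k$; indeed for $r<t$ those partial sums involve only coordinates of $L$, and for $r\ge t$ they are unchanged. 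Hence feasibility for $\textsc{Nested}(v,w)$ is preserved. For the objective, I invoke that $(x^{\downarrow*}_k)_{k\in L}$ is optimal for $\textsc{Nested}(v,t)$, hence in particular optimal for the $\textsc{Rap}$ obtained by fixing the sum over $L$ at $\bar a_t-\bar a_{v-1}$ with boxes $[0,d_k]$. A well-known property of such RAP optima (a $4$-point / convexity exchange argument, cf. the greedy optimality results cited for polymatroids) is: moving a unit from a coordinate currently \emph{above} its optimal value to one currently \emph{below} its optimal value never increases the separable convex cost. So each such exchange is cost non-increasing, and after finitely many of them no $i\in L$ violates $x_i\le x^{\downarrow*}_i$. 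The symmetric argument on $R$, using optimality of $\mathbf{x}^{\uparrow*}$ for $\textsc{Nested}(t+1,w)$ and moving mass upward toward $\mathbf{x}^{\uparrow*}$, removes all violations $x_i < x^{\uparrow*}_i$ there; and since these $R$-exchanges only touch coordinates in $R$ they do not disturb the $L$-bounds already achieved, nor any partial sum indexed by $r<t$.

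The step I expect to be the main obstacle is making the objective-monotonicity of the exchange fully rigorous without differentiability or strict convexity: I cannot simply compare derivatives. The clean way is to argue by contradiction at the level of the block subproblem — if an exchange on $L$ strictly decreased $\sum_{k\in L} f_k$, then starting from $\mathbf{x}^{\downarrow*}$ and performing the reverse exchange (which is feasible for $\textsc{Nested}(v,t)$ for the same partial-sum reasons) would strictly decrease its cost, contradicting optimality of $\mathbf{x}^{\downarrow*}$; convexity is what guarantees the reverse move is ``toward'' a feasible point and that one can iterate. A second, more bookkeeping-type obstacle is termination of the exchange process: in the integer case a potential function such as $\sum_{k\in L}|x_k - x^{\downarrow*}_k|$ strictly decreases by $2$ at each step, so finitely many steps suffice; in the continuous case one argues that at most $|L|$ ``saturating'' moves are needed, or passes to the limit. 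Once both blocks are fixed up, the resulting point is an optimal solution of $\textsc{Nested}(v,w)$ satisfying the claimed inequalities, proving Theorem~\ref{p1} (and, mutatis mutandis with infinitesimal exchanges, the continuous counterpart Theorem~\ref{p1bis}).
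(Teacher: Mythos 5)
Your opening observations are correct (the $t$-th constraint of \textsc{Nested}$(v,w)$ gives $\sum_{k\in L}x^{**}_k\le \bar a_t-\bar a_{v-1}=\sum_{k\in L}x^{\downarrow*}_k$, and the complementary inequality on $R$ follows), but the core of your argument breaks at exactly the point where the nested constraints \emph{inside} the blocks $L$ and $R$ matter, and this is the hard part of the theorem. First, feasibility of the forward exchange: decreasing $x_i$ and increasing $x_j$ with $i,j\in L$ changes the prefix sum $\sum_{k=s[v-1]+1}^{s[r]}x_k$ by $+1$ for every $r$ with $j\le s[r]<i$. Your justification that these sums ``involve only coordinates of $L$'' does not imply they do not increase; if $j<i$ with a breakpoint $s[r]$ in between and the $r$-th constraint is tight in $\mathbf{x}^{**}$, the exchange is infeasible. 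Such a configuration can genuinely occur, because $\mathbf{x}^{**}$ may place less mass than $\mathbf{x}^{\downarrow*}$ in an early sub-block of $L$ and more in a later one (only prefix inequalities are imposed), forcing every compensating index $j$ to lie before the violating index $i$. Second, the cost comparison: $\mathbf{x}^{\downarrow*}$ is optimal for \textsc{Nested}$(v,t)$, \emph{not} for the relaxed RAP over $L$ with only the box constraints, so the ``well-known RAP exchange property'' does not apply; and your fallback --- that the reverse exchange applied to $\mathbf{x}^{\downarrow*}$ would be feasible for \textsc{Nested}$(v,t)$ ``for the same partial-sum reasons'' --- fails symmetrically, since the reverse move raises the prefix sums at breakpoints between $i$ and $j$ when $i<j$, and those may be tight in $\mathbf{x}^{\downarrow*}$. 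The two failure modes are complementary in the sign of $i-j$, so whenever the violating and compensating indices are separated by a breakpoint, at least one of your two claims is unsupported.

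For comparison, the paper sidesteps both issues by not doing pairwise exchanges at all: it perturbs the objective by $\frac{\xi}{B+1}\max\{x-x^{\downarrow*}_i,0\}$ so that $\mathbf{x}^{\downarrow*}$ becomes the unique optimum, runs the greedy increment algorithm (whose optimality under polymatroidal/nested constraints is the imported black box from Federgruen--Groenevelt), and reads off the intermediate greedy iterate at total mass $\widetilde a_t-\bar a_{v-1}$; monotonicity of greedy gives $x^{**}_i\le x^{\downarrow*}_i$ for free, and a change of variables $\hat x_i=d_i-x_i$ handles the right block. If you want to salvage an exchange proof, the missing idea is precisely a mechanism for moving mass across breakpoints without violating tight intermediate constraints (an uncrossing or lattice/submodularity argument); as written, the proposal does not contain it.
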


\begin{proof}
This proof relies on the optimality of the greedy algorithm for general RAP in the presence of polymatroidal constraints \citep{Federgruen1986,Ibaraki1988}. The greedy algorithm for \textsc{Nested}$(v,w)$ (Algorithm \ref{greedy}) iteratively considers all variables $x_i$ which can be feasibly incremented by one unit, and increments the least-cost one. This algorithm has one degree of freedom in case of tie (Line 4). In the proof, we add a marginal component in the objective function to break these ties in favor of increments that are part of desired optimal solutions.

\begin{algorithm}[!h]
\caption{\textsc{Greedy}}
\label{greedy}
\begin{algorithmic}[1]
\begin{small}
\STATE $\textbf{x} = (x_1,\dots,x_n) \gets  (0,\dots,0)$
\STATE $E \gets (1,\dots,n)$ ; $I \gets  \bar{a}_w -  \bar{a}_{v-1}$
\WHILE{$I > 0$ \textbf{and} $E  \neq \emptyset $}
\STATE Find $i \in E$ such that $f_i(x_{i+1}) - f_i(x_{i}) = \min_{k \in E} \{ f_k(x_{k+1}) - f_k(x_{k}) \}$ 
\STATE $\mathbf{x'} \gets \textbf{x}$ ; $x'_i \gets x_i + 1$. 
\IF{\textbf{x'} is feasible}
\STATE $\textbf{x} \gets \textbf{x'}$ ; $I \gets I-1$
\ELSE 
\STATE  $ E  \gets E \backslash \{i\} $
\ENDIF
\ENDWHILE
\IF{$I > 0$}
\RETURN \textbf{Infeasible}
\ELSE
\RETURN $\mathbf{x}$
\ENDIF
\end{small}
\end{algorithmic}
\end{algorithm}

First, $(x^{\downarrow*}_{s[v-1]+1},\dots, x^{\downarrow*}_{s[t]}, x^{\uparrow*}_{s[t]+1},\dots, x^{\uparrow*}_{s[w]})$ is a feasible solution of \textsc{Nested}$(v,w)$, and thus at least one optimal solution $\widetilde{\mathbf{x}}$ of \textsc{Nested}$(v,w)$ exists. Define $\widetilde{a}_t =  \bar{a}_{v-1} + \sum_{k=s[v-1]+1}^{s[t]}  \widetilde{x}_k$. The feasibility of  $\widetilde{\mathbf{x}}$ leads to
\begin{align}
0 \leq \widetilde{a}_t  &\leq  \bar{a}_t, \\
 \widetilde{a}_t  + \sum_{k=s[t]+1}^{s[w]}  d_k &\geq  \bar{a}_{w}.  \label{greater}
\end{align}

The problem \textsc{Nested}$(v,t)$ has a discrete and finite set of solutions, and the associated set of objective values is discrete and finite. If all feasible solutions are optimal, then set $\xi=1$, otherwise let $\xi > 0$ be the gap between the best and the second best objective value. Consider \textsc{Nested}$(v,t)$ with a modified separable objective function $\mathbf{\bar{f}}$ such that for $i \in \{s[v-1]+1, \dots, s[t]\}$,
\begin{equation}
\bar{f}_i(x) = f_i(x) + \frac{\xi}{B+1}  \max \{x - x^{\downarrow*}_i, 0 \}.
\end{equation}

Any solution $\mathbf{x}$ of the modified \textsc{Nested}$(v,t)$ with $\mathbf{\bar{f}}$ is an optimal solution of the original problem with  $\mathbf{f}$ if and only if $\mathbf{\bar{f}}(\mathbf{x}) < \mathbf{f}(\mathbf{x^{\downarrow*}}) + \xi$, and the new problem admits the unique optimal solution $\mathbf{x^{\downarrow*}}$. Thus, \textsc{Greedy} returns $\mathbf{x^{\downarrow*}}$ after $\bar{a}_t -  \bar{a}_{v-1}$ increments. Let $\mathbf{x}^{**} =  (x^{**}_{s[v-1]+1},\dots, x^{**}_{s[t]})$ be the solution obtained at increment $\widetilde{a}_t -  \bar{a}_{v-1}$. By the properties of  \textsc{Greedy}, $\mathbf{x}^{**}$ is an optimal solution of \textsc{Nested}$(v,t)$ when replacing $\bar{a}_t$ by $\widetilde{a}_t$, such that $x^{**}_i \leq x^{\downarrow*}_i$  for $i \in \{s[v-1]+1,\dots,s[t]\}$. \\

The same process can be used for the subproblem \textsc{Nested}$(t+1,w)$. With the change of variables $\hat{x}_i = d_i - x_i$, and $g_i(x) =   f_i(d_i -  x)$, the problem becomes
\begin{small}
\begin{align}
\textsc{Nested-bis}(t+1,w)
&\left\{
\begin{aligned}
\min  \hspace*{0.2cm}  &\sum_{i=s[t]+1}^{s[w]} g_i(\hat{x}_i)\\
\text{s.t.} \hspace*{0.2cm}  & \sum_{k=s[i]+1}^{s[w]}  \hat{x}_k  \leq \bar{a}_i - \bar{a}_{w}  + \sum_{k=s[i]+1}^{s[w]}  d_k   && \hspace*{0.5cm}  i  \in \{ t+1,\dots,w-1\}  \\
& \sum_{k=s[t]+1}^{s[w]}  \hat{x}_k = \bar{a}_t - \bar{a}_{w}  + \sum_{k=s[t]+1}^{s[w]}  d_k   \\
&  0 \leq \hat{x}_i \leq  \ d_i &&  \hspace*{0.5cm} i  \in \{ s[t]+1,\dots,s[w]\}.
\end{aligned} 
\right. \label{nestedEquiv}
\end{align}
 \end{small}

If all feasible solutions of $\textsc{Nested-bis}(t+1,w)$ are optimal, then set $\hat{\xi}=1$, otherwise
let $\hat{\xi} > 0$ be the gap between the best and second best solution of $\textsc{Nested-bis}(t+1,w)$. For  $i \in \{s[t]+1, \dots, s[w]\}$, define $\hat{x}^{\uparrow*}_i = d_i - x^{\uparrow*}_i$ and $\mathbf{\bar{g}}$ such that
\begin{equation}
\bar{g}_i(x) = g_i(x) + \frac{\hat{\xi}}{B+1}  \max \{x - \hat{x}^{\uparrow*}_i, 0 \}.
\end{equation}

\textsc{Greedy} returns $\mathbf{\hat{x}^{\uparrow*}}$, the unique optimal solution of \textsc{Nested-bis}$(t+1,w)$ with the modified objective $\mathbf{\bar{g}}$.  Let  $\mathbf{\hat{x}^{**}}$ be the solution obtained at step $\widetilde{a}_t - \bar{a}_{w}  + \sum_{k=s[t]+1}^{s[w]}  d_k$. This step is non-negative according to Equation (\ref{greater}).  \textsc{Greedy} guarantees that $\mathbf{\hat{x}^{**}}$ is an optimal solution of \textsc{Nested-bis}$(t+1,w)$ with the alternative equality constraint
\begin{equation}
\sum_{k=s[t]+1}^{s[w]}  \hat{x}_k = \widetilde{a}_t - \bar{a}_{w}  + \sum_{k=s[t]+1}^{s[w]}  d_k.
\end{equation}
In addition,  $\hat{x}^{**}_i \leq \hat{x}^{\uparrow*}_i$  for $i \in \{s[t]+1, \dots, s[w]\}$. Reverting the change of variables, this leads to an optimal solution ${\mathbf{x}}^{**}$ of $\textsc{Nested}(t+1,w)$ where $\bar{a}_t$ has been replaced by $\widetilde{a}_t$, and such that $x^{**}_i \geq x^{\uparrow*}_i$  for $i \in \{s[t]+1, \dots, s[w]\}$.

Overall, since $\mathbf{x}^{**}$ is such that  $\sum_{k=s[v-1]+1}^{s[t]}  x^{**}_k =  \sum_{k=s[v-1]+1}^{s[t]}  \widetilde{x}_k =\widetilde{a}_t - \bar{a}_{v-1}$, since it is also optimal  for the two sub-problems obtained when fixing $\widetilde{a}_t =  \bar{a}_{v-1} + \sum_{k=s[v-1]+1}^{s[t]}  x^{**}_k$, then $x^{**}$ is an optimal solution of  \textsc{Nested}$(v,w)$ which satisfies the requirements of Theorem \ref{p1}.
\end{proof}

\begin{theorem}
\label{p1bis}
The statement of Theorem \ref{p1} is also valid for the problem with continuous variables.
\end{theorem}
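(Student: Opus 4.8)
The plan is to derive the continuous statement from Theorem~\ref{p1} by a discretization-and-limit argument, with a vanishing perturbation of the objective added to guarantee that the discrete optima appearing in the limit obey the desired one-sided bounds. Write $B_1=\{s[v-1]+1,\dots,s[t]\}$ and $B_2=\{s[t]+1,\dots,s[w]\}$. For a parameter $\epsilon>0$, introduce the perturbed separable objective with $f^\epsilon_i(x)=f_i(x)+\epsilon\,(x-x^{\downarrow*}_i)^+$ for $i\in B_1$ and $f^\epsilon_i(x)=f_i(x)+\epsilon\,(x^{\uparrow*}_i-x)^+$ for $i\in B_2$; each summand is proper convex, and $0\le f^\epsilon_i-f_i\le\epsilon\,d_i$ on $[0,d_i]$. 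Because $(x-x^{\downarrow*}_i)^+\ge0$ vanishes at $x^{\downarrow*}_i$, the perturbation leaves the optimal value of \textsc{Nested}$(v,t)$ unchanged, and the optimal set of \textsc{Nested}$(v,t)$ under $f^\epsilon$ is exactly the set of minimizers of $f$ that additionally satisfy $x_i\le x^{\downarrow*}_i$ for all $i\in B_1$ --- a nonempty set (it contains $\mathbf{x}^{\downarrow*}$) all of whose members obey the wanted upper bounds; symmetrically every minimizer of $f^\epsilon$ over \textsc{Nested}$(t+1,w)$ satisfies $x_i\ge x^{\uparrow*}_i$ on $B_2$.

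Next, for each integer $N\ge1$, restrict the three problems \textsc{Nested}$(v,t)$, \textsc{Nested}$(t+1,w)$, \textsc{Nested}$(v,w)$ with objective $f^\epsilon$ to the grid $\tfrac1N\mathbb{Z}$; after the rescaling $x\mapsto Nx$ these become bona fide \emph{integer} \textsc{Nested} instances, since $\bar a_i$, $d_i$ and $B$ are integers, and they are feasible once $N$ is large. Taking optimal solutions $\check{\mathbf{x}}^{\downarrow}_N$, $\check{\mathbf{x}}^{\uparrow}_N$ of the discretized \textsc{Nested}$(v,t)$, \textsc{Nested}$(t+1,w)$ and applying Theorem~\ref{p1} to them yields an optimal solution $\check{\mathbf{x}}^{**}_N$ of the discretized \textsc{Nested}$(v,w)$ with $\check x^{**}_{N,i}\le\check x^{\downarrow}_{N,i}$ on $B_1$ and $\check x^{**}_{N,i}\ge\check x^{\uparrow}_{N,i}$ on $B_2$. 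All these vectors lie in the compact box $\prod_i[0,d_i]$, so along a subsequence $\check{\mathbf{x}}^{\downarrow}_N\to\mathbf{y}^{\downarrow}$, $\check{\mathbf{x}}^{\uparrow}_N\to\mathbf{y}^{\uparrow}$, $\check{\mathbf{x}}^{**}_N\to\mathbf{y}^{**}$; a standard argument (the discrete optimal values converge to the continuous one, and limit points of discrete optimizers are continuous optimizers) shows $\mathbf{y}^{\downarrow},\mathbf{y}^{\uparrow},\mathbf{y}^{**}$ are optimal for the respective continuous $f^\epsilon$-problems. Hence $y^{\downarrow}_i\le x^{\downarrow*}_i$, $y^{\uparrow}_i\ge x^{\uparrow*}_i$, and the componentwise inequalities pass to the limit: $y^{**}_i\le x^{\downarrow*}_i$ on $B_1$, $y^{**}_i\ge x^{\uparrow*}_i$ on $B_2$. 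Finally, letting $\epsilon\to0$ and writing $\mathbf{y}^{**}=\mathbf{y}^{**}_\epsilon$, the bound $0\le f^\epsilon_i-f_i\le\epsilon d_i$ together with optimality for $f^\epsilon$ forces $f(\mathbf{y}^{**}_\epsilon)$ to the optimal value of \textsc{Nested}$(v,w)$, so a convergent subsequence $\mathbf{y}^{**}_\epsilon\to\mathbf{x}^{**}$ gives a feasible optimal solution of \textsc{Nested}$(v,w)$ that still satisfies $x^{**}_i\le x^{\downarrow*}_i$ on $B_1$ and $x^{**}_i\ge x^{\uparrow*}_i$ on $B_2$, as required.

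The step I expect to be the real work is the convergence of the discrete optima to the continuous ones: one must produce, for every continuous feasible point, a nearby grid point that still satisfies the equality and the nested inequalities (a left-to-right rounding does this, using the integrality of $\bar a_i,d_i$), and one needs enough regularity of the $f_i$ (continuity suffices) for the value convergence and for identifying limit points as optimizers. The perturbation is what rules out the failure mode of the scheme --- without it $\check{\mathbf{x}}^{\downarrow}_N$ and $\check{\mathbf{x}}^{\uparrow}_N$ could converge to continuous optima unrelated to $\mathbf{x}^{\downarrow*},\mathbf{x}^{\uparrow*}$, making the limiting inequalities vacuous --- and it is removed harmlessly at the end. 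A purely continuous alternative is to reduce the claim to a monotonicity statement (any optimal $\widetilde{\mathbf{x}}$ of \textsc{Nested}$(v,w)$ restricts to block-optimal solutions of the two subproblems after replacing $\bar a_t$ by $\widetilde a_t=\bar a_{v-1}+\sum_{k\in B_1}\widetilde x_k\le\bar a_t$, so it is enough to show that decreasing the total resource of a \textsc{Nested}-type problem admits an optimal solution that only shrinks componentwise) and to prove that statement by an exchange argument; but keeping the exchange feasible against the nested constraints requires delicate index bookkeeping, so the limiting argument above looks like the safer route.
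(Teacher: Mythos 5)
Your proof is correct in substance, and it reaches the theorem by a genuinely different mechanism than the paper. Both arguments discretize: your restriction to the grid $\tfrac1N\mathbb{Z}$ is, after rescaling, exactly the paper's scaled problem \textsc{Nested-}$\beta$ with $\beta=N$, and both then invoke Theorem~\ref{p1} on the resulting integer instances and pass to a limit. The divergence is in how the discrete optima are tied to the \emph{given} continuous optima $\mathbf{x}^{\downarrow*},\mathbf{x}^{\uparrow*}$. The paper imports Hochbaum's proximity theorem \citep{Hochbaum1994}, which directly supplies integer optima of the scaled subproblems within $\ell_\infty$-distance $n/\beta$ of $\mathbf{x}^{\downarrow*}$ and $\mathbf{x}^{\uparrow*}$ (and continuous optima near the integer $\mathbf{\hat{x}}^{**[\beta]}$), and then runs a $\Delta/3$ contradiction. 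You avoid the proximity theorem entirely: the $\epsilon$-perturbation pins down the optimal set of the perturbed subproblems to lie in the region $\{x_i\le x^{\downarrow*}_i\}$, resp.\ $\{x_i\ge x^{\uparrow*}_i\}$, so that every limit point of discrete optimizers (which is a continuous optimizer by value convergence) inherits these bounds even though it need not equal $\mathbf{x}^{\downarrow*}$ itself; the inequalities of Theorem~\ref{p1} then survive the limit in $N$, and a second limit $\epsilon\to0$ removes the perturbation. What your route buys is self-containedness --- only Theorem~\ref{p1}, compactness, and an elementary rounding are used --- at the price of the bookkeeping you correctly flag: one must verify that the telescoped floor rounding of partial sums yields a grid-feasible point within $\tfrac1N$ of any continuous feasible point (it does, because $N\bar a_i$ and $Nd_i$ are integers, so the equality constraint and the box bounds are preserved), and one needs the $f_i$ to be closed (lower semicontinuous on $[0,d_i]$) so that limit points of near-optimizers are optimizers --- a regularity assumption the paper's proof also uses implicitly. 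The paper's route is shorter but rests on a nontrivial external theorem whose applicability to the nested (polymatroidal) constraint structure must itself be checked, and its uniform-$\Delta$ contradiction setup tacitly uses the same compactness of the optimal set that you use explicitly.
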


\begin{proof}
The proof relies on the proximity theorem of Hochbaum \citep{Hochbaum1994} for general resource allocation problem with polymatroidal constraints. This theorem states that for any optimal continuous solution $\mathbf{x}$ there exists an optimal solution $\mathbf{z}$ of the same problem with integer variables, such that $\mathbf{z} - \mathbf{e} < \mathbf{x} <  \mathbf{z} + n \mathbf{e}$, and thus $||\mathbf{z} - \mathbf{x}||_{\infty} \leq n$. Reversely, for any integer optimal solution $\mathbf{z}$, there exists an optimal continuous solution such that $||\mathbf{z} - \mathbf{x}||_{\infty} \leq n$. \\

Let $(x^{\downarrow*}_{s[v-1]+1},\dots, x^{\downarrow*}_{s[t]})$ and $(x^{\uparrow*}_{s[t]+1},\dots, x^{\uparrow*}_{s[w]})$ be two optimal solutions of \linebreak \textsc{Nested}$(v,t)$ and \textsc{Nested}$(t+1,w)$ with continuous variables, and suppose that the statement of Theorem \ref{p1} is false for the continuous case.
Hence, there exists $\Delta > 0$ such that for any optimal solution $\mathbf{x}^{**}$ of the continuous \textsc{Nested}$(v,w)$ there exists either $i \in \{s[v-1]+1,\dots,s[t]\}$ such that $x^{**}_i \geq  \Delta + x^{\downarrow*}_i$, or $i \in \{s[t]+1,\dots,s[w]\}$ such that $x^{**}_i \leq x^{\uparrow*}_i -  \Delta$. We will prove that this statement is impossible. \\

Define the scaled problem \textsc{Nested-}$\beta(v,t)$ below. This problem admits at least one feasible integer solution as a consequence of the feasibility of \textsc{Nested}$(v,t)$.

\begin{small}
\begin{equation}
\textsc{Nested-}\beta(v,t)
\left\{
\begin{aligned}
\min  \hspace*{0.2cm}  &\sum_{i=s[v-1]+1}^{s[t]}  f_i \left( \frac{x_i}{\beta} \right) \\
\text{s.t.} \hspace*{0.2cm}  & \sum_{k=s[v-1]+1}^{s[i]}  x_k  \leq \beta \bar{a}_i - \beta  \bar{a}_{v-1} && \hspace*{0.5cm}  i  \in \{ v,\dots,t-1\}  \\
& \sum_{i=s[v-1]+1}^{s[t]} x_i  = \beta  \bar{a}_t -  \beta  \bar{a}_{v-1} \\
&  0 \leq x_i \leq  \    \beta d_i  &&  \hspace*{0.5cm} i  \in \{ s[v-1]+1,\dots,s[t]\}
\end{aligned} 
\right. \label{nestedS}
\end{equation}
\end{small}

The proximity theorem of \citep{Hochbaum1994} guarantees the existence of a serie of integer solutions $\mathbf{\hat{x}}^{\downarrow*[\beta]}$ of $\textsc{Nested-}\beta(v,t)$ such that $\lim_{\beta \rightarrow \infty} \Vert  \frac{\mathbf{\hat{x}}^{\downarrow*[\beta]}}{\beta}  - \mathbf{x}^{\downarrow*} \Vert = 0$. With the same arguments, the existence of a serie of integer solutions $\mathbf{\hat{x}}^{\uparrow*[\beta]}$ of $\textsc{Nested-}\beta(t+1,w)$ such that $\lim_{\beta \rightarrow \infty} \Vert \frac{\mathbf{\hat{x}}^{\uparrow*[\beta]}}{\beta}  - \mathbf{x}^{\uparrow*} \Vert = 0$ is also demonstrated.

As a consequence of Theorem \ref{p1}, for any $\beta$ there exists an integer optimal solution $\mathbf{\hat{x}}^{**[\beta]}$ of $\textsc{Nested-}\beta(v,w)$, such that $\hat{x}^{**[\beta]}_i \leq \hat{x}^{\downarrow*[\beta]}_i$ for $i \in \{s[v-1]+1,\dots,s[t]\}$ and
 $\hat{x}^{**[\beta]}_i \geq \hat{x}^{\uparrow*[\beta]}_i$ for $i \in \{s[t]+1,\dots,s[w]\}$.

Finally, the proximity theorem of \citep{Hochbaum1994} guarantees the existence of continuous solutions $\mathbf{x}^{**[\beta]}$ of $\textsc{Nested-}\beta(v,w)$, such that  $\lim_{\beta \rightarrow \infty} \Vert \mathbf{x}^{**[\beta]} - \frac{\mathbf{\hat{x}}^{**[\beta]}}{\beta} \Vert = 0$. \\

Hence, there exist $\beta$, $\mathbf{\hat{x}}^{\downarrow*[\beta]}$, $\mathbf{\hat{x}}^{\uparrow*[\beta]}$,  $\mathbf{\hat{x}}^{**[\beta]}$ and $\mathbf{x}^{**[\beta]}$ such that   
$\Vert \frac{\mathbf{\hat{x}}^{\downarrow*[\beta]}}{\beta} - \mathbf{x}^{\downarrow*} \Vert \leq \frac{\Delta}{3}$, 
$\Vert \frac{\mathbf{\hat{x}}^{\uparrow*[\beta]}}{\beta} - \mathbf{x}^{\uparrow*} \Vert \leq  \frac{\Delta}{3}$, 
and $\Vert \frac{\mathbf{\hat{x}}^{**[\beta]}}{\beta} - \mathbf{x}^{**[\beta]} \Vert \leq  \frac{\Delta}{3}$.

For $i \in \{s[v-1]+1,\dots,s[t]\}$, we have 
$x^{**[\beta]}_i  
\leq \frac{\hat{x}^{**[\beta]}_i}{\beta} + \frac{\Delta}{3}
\leq \frac{\hat{x}^{\downarrow*[\beta]}_i}{\beta}  + \frac{\Delta}{3}
\leq x^{\downarrow*}_i + \frac{2\Delta}{3}$. As a consequence, the statement $x^{**[\beta]}_i \geq \Delta + x^{\downarrow*}_i$ is false.

For $i \in \{s[t]+1,\dots,s[w]\}$, we have
$x^{**[\beta]}_i 
\geq \frac{\hat{x}^{**[\beta]}_i}{\beta} - \frac{\Delta}{3}
\geq \frac{\bar{x}^{\uparrow*[\beta]}_i}{\beta}  - \frac{\Delta}{3}
\geq x^{\uparrow*}_i - \frac{2\Delta}{3}$. As a consequence, the statement $x^{**[\beta]}_i \leq x^{\uparrow*}_i -  \Delta$ is false, and the solution $x^{**[\beta]}$ leads to the announced contradiction. 
\end{proof}

\begin{corollary}
\label{p2}
Let $(x^{\downarrow*}_{s[v-1]+1},\dots, x^{\downarrow*}_{s[t]})$ and $(x^{\uparrow*}_{s[t]+1},\dots, x^{\uparrow*}_{s[w]})$ be two optimal solutions of \textsc{Nested}$(v,t)$ and \textsc{Nested}$(t+1,w)$, respectively. Then, \textsc{Rap}$(v,w)$ with the coefficients $\mathbf{\hat{c}}$ and $\mathbf{\hat{d}}$ given below admits at least one optimal solution, and any of its optimal solutions is also an optimal solution of \textsc{Nested}$(v,w)$. This proposition is valid for continuous and integer variables.
\begin{equation*}
\hat{c}_i = 
\begin{cases}
0 & i \in \{s[v-1]+1,\dots,s[t]\} \\
x_i^* & \text{otherwise}
\end{cases} \hspace*{0.5cm} \text{and}  \hspace*{0.5cm}
\hat{d}_i =
\begin{cases}
 x_i^* & i \in \{s[t]+1,\dots,s[w]\} \\
 d_i & \text{otherwise}
\end{cases}
\end{equation*}
\end{corollary}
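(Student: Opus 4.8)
The plan is to show that, with the stated bounds, the feasible region of $\textsc{Rap}(v,w)$ is contained in that of $\textsc{Nested}(v,w)$ yet still contains an optimal solution of $\textsc{Nested}(v,w)$; since the two problems share the same objective, optimality then transfers in both directions. With these bounds the variables of the left block $i \in \{s[v-1]+1,\dots,s[t]\}$ are restricted to $[0,x^{\downarrow*}_i]$ and those of the right block $i \in \{s[t]+1,\dots,s[w]\}$ to $[x^{\uparrow*}_i,d_i]$. First I would invoke Theorem~\ref{p1} in the integer case and Theorem~\ref{p1bis} in the continuous case to obtain an optimal solution $\mathbf{x}^{**}$ of $\textsc{Nested}(v,w)$ with $x^{**}_i \le x^{\downarrow*}_i$ on the left block and $x^{**}_i \ge x^{\uparrow*}_i$ on the right block. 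This $\mathbf{x}^{**}$ is feasible for $\textsc{Rap}(v,w)$: the equality $\sum_i x^{**}_i = \bar a_w - \bar a_{v-1}$ is common to both problems, and the box constraints hold because $0 \le x^{**}_i \le x^{\downarrow*}_i \le d_i$ on the left block and $x^{\uparrow*}_i \le x^{**}_i \le d_i$ on the right block (using $0 \le x^{\downarrow*}_i$ and $x^{\uparrow*}_i \le d_i$, which are part of the feasibility of $\mathbf{x}^{\downarrow*}$ and $\mathbf{x}^{\uparrow*}$). Hence $\textsc{Rap}(v,w)$ is feasible.

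The core step is to prove that \emph{every} feasible $\mathbf{x}$ of $\textsc{Rap}(v,w)$ satisfies the nested constraints $\sum_{k=s[v-1]+1}^{s[i]} x_k \le \bar a_i - \bar a_{v-1}$ for $i \in \{v,\dots,w-1\}$; combined with $0 \le x_i \le d_i$ (immediate since $\hat c_i \ge 0$ and $\hat d_i \le d_i$) and the common equality constraint, this gives that the feasible set of $\textsc{Rap}(v,w)$ is contained in that of $\textsc{Nested}(v,w)$. I would split by the position of $i$. For $i \le t-1$: bound $\sum_{k=s[v-1]+1}^{s[i]} x_k \le \sum_{k=s[v-1]+1}^{s[i]} \hat d_k = \sum_{k=s[v-1]+1}^{s[i]} x^{\downarrow*}_k$ and use that $\mathbf{x}^{\downarrow*}$ satisfies the $i$-th nested constraint of $\textsc{Nested}(v,t)$. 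For $i = t$: the same upper bound equals $\bar a_t - \bar a_{v-1}$ by the equality constraint of $\textsc{Nested}(v,t)$. For $t+1 \le i \le w-1$: rewrite $\sum_{k=s[v-1]+1}^{s[i]} x_k = (\bar a_w - \bar a_{v-1}) - \sum_{k=s[i]+1}^{s[w]} x_k$ using the equality constraint of $\textsc{Rap}(v,w)$, bound $\sum_{k=s[i]+1}^{s[w]} x_k \ge \sum_{k=s[i]+1}^{s[w]} \hat c_k = \sum_{k=s[i]+1}^{s[w]} x^{\uparrow*}_k$, and combine the equality $\sum_{k=s[t]+1}^{s[w]} x^{\uparrow*}_k = \bar a_w - \bar a_t$ with the $i$-th nested constraint $\sum_{k=s[t]+1}^{s[i]} x^{\uparrow*}_k \le \bar a_i - \bar a_t$ of $\textsc{Nested}(t+1,w)$ to obtain $\sum_{k=s[i]+1}^{s[w]} x^{\uparrow*}_k \ge \bar a_w - \bar a_i$; this delivers $\sum_{k=s[v-1]+1}^{s[i]} x_k \le \bar a_i - \bar a_{v-1}$.

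With the inclusion of feasible sets in hand I would conclude: every feasible solution of $\textsc{Rap}(v,w)$ is feasible for $\textsc{Nested}(v,w)$ and hence has objective value at least $f(\mathbf{x}^{**})$, the optimal value of $\textsc{Nested}(v,w)$; since $\mathbf{x}^{**}$ is itself feasible for $\textsc{Rap}(v,w)$ and attains this value, $\mathbf{x}^{**}$ is an optimal solution of $\textsc{Rap}(v,w)$, so $\textsc{Rap}(v,w)$ admits an optimum and the two optimal values coincide. Therefore any optimal solution of $\textsc{Rap}(v,w)$ has objective $f(\mathbf{x}^{**})$, is feasible for $\textsc{Nested}(v,w)$, and is thus optimal for $\textsc{Nested}(v,w)$. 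The reasoning is identical for integer and continuous variables, the only difference being whether Theorem~\ref{p1} or Theorem~\ref{p1bis} supplies $\mathbf{x}^{**}$. The main obstacle is the third case of the nested-constraint check: unlike the first two, which follow almost at once from $\hat d_i = x^{\downarrow*}_i$ and the feasibility of $\mathbf{x}^{\downarrow*}$, it requires converting the lower bounds $\hat c_i = x^{\uparrow*}_i$ into an \emph{upper} bound on a prefix sum by routing through the global equality constraint and a telescoping identity with the nested constraints of $\textsc{Nested}(t+1,w)$.
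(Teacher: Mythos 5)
Your proposal is correct and follows essentially the same route as the paper: invoke Theorems~\ref{p1} and \ref{p1bis} to obtain an optimal solution of \textsc{Nested}$(v,w)$ satisfying the new box bounds, then show that those bounds (together with the equality constraint) imply all the nested constraints, so that dropping them yields \textsc{Rap}$(v,w)$ with the same optimal set restricted appropriately. Your handling of the case $t+1 \le i \le w-1$, routing the lower bounds $\hat{c}_k = x^{\uparrow*}_k$ through the global equality constraint to get an upper bound on the prefix sum, is exactly the paper's argument, written out slightly more explicitly.
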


\begin{proof}
As demonstrated in Theorems \ref{p1} and \ref{p1bis}, there exists an optimal solution $\textbf{x}^{**}$ of \textsc{Nested}$(v,w)$, such that $x^{**}_k \leq x^{\downarrow*}_k$  for $k \in \{s[v-1]+1,\dots,s[t]\}$ and $x^{**}_k \geq x^{\uparrow*}_k$ for $k \in \{s[t]+1,\dots,s[w]\}$. These two sets of constraints can be introduced in the formulation (\ref{nested1}). Any optimal solution of this strengthened formulation is an optimal solution of \textsc{Nested}$(v,w)$, and the strengthened formulation admits at least one feasible solution. The following relations hold for any solution $\mathbf{x} = (x_{s[v-1]+1},\dots,x_{s[w]})$:

\begin{equation}
\begin{aligned}
x_k \leq x^{\downarrow*}_k \text{ for }   & k  \in \{ s[v-1]+1,\dots,s[t]\}  \\ &\Rightarrow  \sum_{k=s[v-1]+1}^{s[i]} x_k  \leq   \sum_{k=s[v-1]+1}^{s[i]} x^{\downarrow*}_k  \\ &\Rightarrow  \sum_{k=s[v-1]+1}^{s[i]} x_k  \leq \bar{a}_i - \bar{a}_{v-1}  \text{ for }  i  \in \{ v,\dots,t\} 
\end{aligned}
\end{equation}
\begin{equation}
\begin{aligned}
 x_k \geq x^{\uparrow*}_k \text{ for }  & k  \in \{ s[t]+1,\dots,s[w]\} \\ &\Rightarrow  \sum_{k=s[i]+1}^{s[w]} x_k \geq  \sum_{k=s[i]+1}^{s[w]} x^{\uparrow*}_k \\ 
&\Rightarrow  \sum_{k=s[v-1]+1}^{s[i]} x_k \leq  \sum_{k=s[v-1]+1}^{s[i]}  x^{\uparrow*}_k \\
&\Rightarrow   \sum_{k=s[v-1]+1}^{s[i]}  x_k \leq   \bar{a}_{i} - \bar{a}_{v-1}  \text{ for }   i  \in \{ t,\dots,w-1\} 
\end{aligned}
\end{equation}

Hence, any solution satisfying the constraints $x_i \leq \hat{d}_i$ for $i \in \{s[v-1]+1,\dots,s[t]\}$ and $x^{**}_i \geq \hat{c}_i$ for $i \in \{s[t]+1,\dots,s[w]\}$ also satisfies the  constraints 

\begin{equation}
 \sum_{k=s[v-1]+1}^{s[i]}  x_k  \leq \  \bar{a}_i - \bar{a}_{v-1}  \text{ for }   i  \in \{ v,\dots,w-1\}. \label{useless}
 \end{equation}
The nested constraints (\ref{useless}) can thus be removed, and the formulation \textsc{Rap}$(v,w)$ is obtained.
\end{proof}

\section{Computational complexity}
\label{complexity}

This section investigates the computational complexity of the proposed method for integer and continuous problems, as well as for the specific case of quadratic objective functions.  \\

\begin{theorem}
\label{p4}
The proposed algorithm for NESTED with integer variables works with a complexity of $O(n \log n \log \frac{B}{n})$.
\end{theorem}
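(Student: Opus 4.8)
The plan is to split the running time of Algorithms~\ref{general-algo} and~\ref{algo-subprob} into three parts and bound each separately: the \textsc{Tightening} and \textsc{Feasibility} phases, the updates of $\bar c$ and $\bar d$ inside \textsc{Nested}$(v,w)$, and the calls to \textsc{Rap}. First I would note that the two initialization phases run in $O(n)$ time: the recurrence for $\bar a_i$ is a single left-to-right pass, and the quantities $\sum_{k=s[i-1]+1}^{n} d_k$ needed for the feasibility test are obtained from one right-to-left pass of partial sums.

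Next I would analyze the recursion tree of \textsc{Nested}$(1,m)$. Its depth is $1+\lceil\log m\rceil = O(\log m)\subseteq O(\log n)$ (using $m\le n$), since each call replaces $[v,w]$ by two ranges of at most half the length. The structural observation I would isolate is that, at any fixed depth, the constraint ranges $[v,w]$ of the nodes present are pairwise-disjoint subintervals of $\{1,\dots,m\}$; hence their variable ranges $\{s[v-1]+1,\dots,s[w]\}$ are pairwise-disjoint subsets of $\{1,\dots,n\}$, and by telescoping of the $\bar a$'s their resources $R_{vw}=\bar a_w-\bar a_{v-1}$ sum to at most $B$ (with equality at depths where no branch has yet reached a singleton). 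This already shows that the $\bar c,\bar d$ updates cost $O(n)$ per depth, hence $O(n\log n)$ in total, which is dominated by the claimed bound.

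The crux is the total cost of the \textsc{Rap} calls. Every node solves exactly one \textsc{Rap}: the leaves call \textsc{Rap}$(v,v)$ and the internal nodes call \textsc{Rap}$(v,w)$, and after shifting each variable by its lower bound $\hat c_i$ every such \textsc{Rap} over $k$ variables with residual resource $R$ is in standard form and solved in $\Phi_{\textsc{Rap}}(k,R)=O(k\log(R/k))$. Fixing a depth $d$ and writing $(k_j),(R_j)$ for the variable counts and resources of its nodes, I would bound $\sum_j k_j\log(R_j/k_j)$. Because $\phi(k,R)=k\log(R/k)$ is the perspective of the concave function $\log$, it is concave and positively homogeneous of degree one, hence superadditive on the positive cone; together with $\sum_j k_j\le n$ and $\sum_j R_j\le B$ this gives $\sum_j\phi(k_j,R_j)\le\phi(n,B)=n\log(B/n)$. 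Summing over the $1+\lceil\log m\rceil=O(\log n)$ depths yields $O(n\log n\log(B/n))$ for all \textsc{Rap} calls, which dominates the $O(n\log n)$ from the other two parts. In the degenerate regime $R<k$ the $O(k)$ term of $\Phi_{\textsc{Rap}}$ takes over, which is consistent with the usual convention of reading $\log(B/n)$ as $\max(1,\log(B/n))$.

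The step I expect to be the main obstacle is precisely this per-depth bound on the \textsc{Rap} work. Since the recursion halves constraint-index ranges rather than variable ranges, the two children of a node can be arbitrarily unbalanced in their number of variables, so a textbook divide-and-conquer recurrence does not close and one cannot simply assert that each depth consists of $O(\log n)$ balanced pieces totalling $O(n)$ variables. What rescues the argument is only the disjointness of the variable (and resource) ranges at a given depth, combined with the superadditivity of $k\log(R/k)$; a minor, routine point is the behaviour of $\Phi_{\textsc{Rap}}$ when the residual resource is smaller than the number of variables, which I would dispatch with the linear term.
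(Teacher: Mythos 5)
Your proposal is correct and follows essentially the same route as the paper: an $O(n)$ preprocessing bound, $O(n)$ per level for the bound updates over $1+\lceil\log_2 m\rceil$ levels, and a per-level bound of $n\log(B/n)$ on the total \textsc{Rap} work using disjointness of the variable ranges and resources at each depth. Your ``superadditivity of the perspective of $\log$'' is exactly the weighted Jensen step the paper applies with weights $k_j/n$, and your remarks on the unbalanced children and the small-resource regime are just slightly more explicit versions of what the paper leaves implicit.
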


\begin{proof}
After a pre-processing step in $O(n)$ operations (Algorithm 1, Lines 1 to 7), the integer NESTED problem is solved as a hierarchy of RAP, with $h = 1 +  \lceil  \log_2 m  \rceil $ levels of recursion (Algorithm 2, Lines 4 to 6). At each level $i \in \{1,\dots,h\}$, $2^{h-i}$ RAP sub-problems are solved (Algorithm 2, Lines 2 and 11). Furthermore, there are $O(n)$ operations per level to maintain $\mathbf{\hat{c}}$ and $\mathbf{\hat{d}}$ (Algorithm 2, L7-10).
The method of Frederickson and Johnson \citep{Frederickson1982} for RAP works in $O(\log n \log \frac{B}{n})$. Hence, each $\textsc{Rap}(v,w)$ can be solved in $O((s[w]-s[v]) \log \frac{a_{w} - a_{v}}{s[w]-s[v]})$ operations.
Overall, there exist positive constants $K$, $K'$ and $K''$ such that the number of operations $\Phi(n,m)$ of the proposed method is

\begin{small}
\begin{equation*}
\begin{aligned}
\Phi(n,m,B) &\leq K n + \sum_{i=1}^{h} \left( K' n +  \sum_{j=1}^{2^{h-i}} 
 K''\left( s[2^i j] - s[2^i (j-1)] \right) \log{\left( \frac{a_{2^i \times j} - a_{2^i \times (j-1)} } {s[2^i j] - s[2^i (j-1)]} \right)} \right)  \\
&= K n + K' n h +  K'' n \sum_{i=1}^{h} \sum_{j=1}^{2^{h-i}} 
\frac{s[2^i j] - s[2^i (j-1)]}{n} \log {\left(\frac{a_{2^i \times j} - a_{2^i \times (j-1)} } {s[2^i j] - s[2^i (j-1)]}\right)} \end{aligned}
\end{equation*}
\begin{equation*}
\begin{aligned}
&\leq K n + K' n h +  K'' n \sum_{i=1}^{h} \log{  \left(   \frac{\sum_{j=1}^{2^{h-i}}   (a_{2^i \times j} - a_{2^i \times (j-1)} )}{n} \right)  } \\
&\leq K n + K' n h +  K'' n h \log \frac{B}{n}\\
& = K n + K' n (1+   \lceil  \log m  \rceil ) +  K'' n (1+   \lceil  \log m  \rceil ) \log \frac{B}{n}.
\end{aligned}
\end{equation*}
\end{small}

This leads to the announced complexity of $O(n \log m \log \frac{B}{n})$. 
\end{proof}

For the continuous case, two situations can be discerned.
When there exists an ``exact'' solution method independent of $\epsilon$ to solve the RAP subproblems, e.g. when the objective function is quadratic, the convergence is guaranteed by Theorem \ref{p1bis}.
As such, the algorithm of Brucker \citep{Brucker1984} or Maculan et al. \citep{Maculan2003} can be used to solve each quadratic RAP sub-problem in $O(n)$, leading to an overall complexity of $O(n \log m)$ to solve the quadratic NESTED resource allocation problem.

In the more general continuous case without any other assumption on the objective functions, all problem parameters can be scaled by a factor $\frac{n}{\epsilon}$ \citep{Hochbaum1994}, and the integer problem with $B' = \frac{Bn}{\epsilon}$ can be be solved with complexity $O(n \log m \log \frac{B'}{n \epsilon}) \allowbreak = O(n \log m \log \frac{B}{\epsilon})$. The proximity theorem guarantees that an $\epsilon$-accurate solution of the continuous problem is obtained after the reverse change of variables. \\

Finally, we have assumed in this paper integer values for $a_i$, $B$, and $d_i$.
Now, consider fractional parameter values with $z$ significant figures and $x$ decimal places. All problem coefficients as well as $\epsilon$ can be scaled by a factor $10^x$ to obtain integer parameters, and the number of elementary operations of the method remains the same. We assume in this process that operations are still elementary for numbers of $z+x$ digits. This is a common assumption when dealing with continuous parameters.

\section{Experimental analyses}
\label{exp}

To assess the practical performance of the proposed algorithm, we implemented it as well as the three other methods. In the following, we thus compare
\begin{itemize}[nosep]
\item PS09 : the dual algorithm of Padakandla and Sundaresan \citep{Padakandla2009};
\item W14 : the dual algorithm of Wang \citep{Wang2012};
\item H94 : the scaled greedy algorithm of Hochbaum \citep{Hochbaum1994};
\item MOSEK : the interior point method of MOSEK \citep[][for conic quadratic opt.]{Andersen2003};
\item THIS : The proposed method.  \\
\end{itemize}

Each algorithm is tested on NESTED instances with three types of objective functions.
The first objective function profile comes from \citep{Padakandla2009,Wang2012}. We also consider two other objectives related to project and production scheduling applications. The size of instances ranges from $n=10$ to $1,000,000$. To further investigate the impact of the number of nested constraints, additional instances with a fixed number of tasks and a variable number of nested constraints are also considered. An accuracy of $\epsilon = 10^{-8}$ is sought, and all tests are conducted on a Xeon 3.07 GHz CPU.

The instances proposed in \citep{Padakandla2009,Wang2012} are continuous with non-integer parameters $a_i$ and $B$. We generated these parameters with nine decimals. Following the last remark of Section \ref{complexity}, all problem parameters and $\epsilon$ can be multiplied by $10^{9}$ to obtain a problem with integer coefficients. For a fair comparison with previous authors, we rely on a similar RAP method as \citep{Srikantan1963,Padakandla2009,Wang2012}, using bisection search on the Lagrangian equation to solve the sub-problems. The derivative $f'_i$ of $f_i$ is well-defined for all test instances. This Lagrangian method does not have the same complexity guarantees as \citep{Frederickson1982}, but performs reasonably well in practice. The initial bisection-search interval for each $\textsc{Rap}(v,w)$ is set to
$[ \min_{i\in\{s[v-1]+1,\dots,s[w]\}} f'_i(\hat{c_i}),   \max_{i\in\{s[v-1]+1,\dots,s[w]\}}  f'_i(\hat{d}_i)]$.
 
Implementing  previous algorithm from the literature led to some further questions and implementation choices.  As mentioned in \citep{Hochbaum1994}, some specific implementations of Union-Find \citep{Gabow1985} can achieve a $O(1)$ amortized complexity for feasibility checks. The implementation of these dedicated structures is intricate, and we privileged a more standard Union-Find with  balancing and path compression \citep{Tarjan1975}, attaining a complexity of $\alpha^{\textsc{ack}}(n)$ where $\alpha^{\textsc{ack}}$ is the inverse of the Ackermann function. For all practical purposes, this complexity is nearly constant and the practical performance of the simpler implementation is very similar. The algorithm of \citep{Hochbaum1994} also requires a correction, which is well-documented in \citep{Moriguchi2004}.
In \citep{Wang2012}, the first positive value of the function of \citep[][Equation 20]{Wang2012} is stated to be obtained by binary search. However, this function is non-monotonous in practice.
Looking further at this equation, we observed that only indices $r$ related to current active constraints $a_r$ should be considered. With this change, the function becomes monotonous and binary search can be applied.

Finally, as observed in our experiments, most randomly-generated instances lead to solutions with few active nested constraints. This aspect is discussed in Section \ref{active}. To  further investigate the performance of all methods on different settings, an alternative parameter generation process has been used to produced additional instances.

\subsection{Problem instances -- previous literature}
\label{applications}

We first consider the test function (\ref{f2}) from \citep{Padakandla2009}. The problem was originally formulated with nested constraints of the type $\sum_{k=1}^{s[i]}  x_k  \geq  a_i $ for $i  \in \{ 1,\dots,m-1\}$.  The change of variables $\hat{x_i} = 1- x_i$ can be used to obtain (\ref{RAP:1}-\ref{RAP:3bis}).
\begin{align}
[\text{F}] \hspace*{1cm}  f_i (x) &= \frac{x^4}{4} + p_i x,  & x \in [0,1]  \label{f2}
\end{align}

The problem instances of \citep{Padakandla2009} have been generated with uniformly distributed $p_i$ and $\alpha_i$ in [0,1] (recall that $\alpha_i = a_i - a_{i-1}$). The $p_i$ are then sorted by increasing value. As observed in our experiments, this ordering of parameters leads to very few active nested constraints. To examine method performances on a wider range of settings, we thus generated two other instance sets called [F-Uniform] and [F-Active]. In [F-Uniform], parameters $p_i$ and $\alpha_i$ are generated with uniform distribution, between [0,1] and [0,0.5], respectively, and non-ordered. [F-Active] is generated in the same way, and $\alpha_i$ are sorted in decreasing order. As a consequence, these latter instances have many active constraints.  \citep{Padakandla2009} considered some other test functions for which the solution of the Lagrangian equation admits a closed additive form. As such, each Lagrangian equation can be solved in PS09 in amortized $O(1)$ instead of $O(n \log \frac{B}{n})$. This case is very specific and does not take into account arbitrary bounds $d_i$. Thus, we selected the first type of function for our experiments since it is representative of the general case and does not open the way to function-specific strategies.

\subsection{Problem instances -- Project crashing} 
\label{R-PERT}
A seminal problem in project management \citep{Kelley1959,Malcolm1959} relates to the optimization of a critical path of tasks in the presence of non-linear cost/time trade-off functions $f_i (x)$, expressing the cost of processing a task $i$ in $x_i$ time units. Different types of trade-off functions have been investigated in the literature \citep{Fulkerson1961,Berman1964,Robinson1975,Foldes1993,Cheng1998}. The algorithm of this paper can provide the best compression of a critical path to finish a project at time $B$, while imposing additional deadline constraints $\sum_{k=1}^{s[i]}  x_k  \leq  a_i $ for $i  \in \{ 1,\dots,m-1\}$ on some steps of the project. Lower and upper bounds on task durations $c_i \leq x_i \leq d_i$ are also commonly imposed. The change of variables $\hat{x}_i = x_i + c_i$ leads to the formulation (\ref{RAP:1}-\ref{RAP:3bis}).
Computational experiments are performed on these problems with the cost/time trade-off functions of Equation (\ref{cost-pert}), proposed in \citep{Foldes1993}, in which the cost supplement related to crashing grows as the inverse of task duration. 
\begin{align}
[\text{Crashing}] \hspace*{1cm} f_i(x) = k_i +  \frac{p_i}{x}, &\hspace{0.5cm} x  \in [c_i,d_i]  \label{cost-pert}
\end{align}
Parameters $p_i$, $d_i$ and $\alpha_i$ are generated by exponential distributions of mean $\mathbb{E}(p_i) = \mathbb{E}(d_i) = 1$ and $\mathbb{E}(\alpha_i) = 0.75$. Finally, $a_i = \sum_{k=1}^i \alpha_k$ and $c_i = \min (\alpha_i,\frac{d_i}{2})$ to ensure feasibility.

\subsection{Problem instances -- Vessel speed optimization} Some applications require solving multiple NESTED problems. One such case relates to an emergent class of vehicle routing and scheduling problems aiming at jointly optimizing vehicle speeds and routes to reach delivery locations within specified time intervals \citep{Norstad2010,Bektas2011a,Hvattum2013}.
Heuristic and exact methods for such problems consider a very large number of alternative routes (permutations of visits) during the search. For each route, determining the optimal travel times $(x_1,\dots,x_n)$ on $n$ trip segments to satisfy $m$ deadlines $(a_1, \dots, a_m)$ on some locations is the same subproblem as in formulation (\ref{RAP:1}-\ref{RAP:3bis}).
We generate a set of benchmark instances for this problem, assuming as in \citep{Ronen1982} that fuel consumption is approximately a cubic function of speed on relevant intervals. In Equation (\ref{vessel-fuel}), $p_i$ is the fuel consumption on the way to location $i$ per time unit at maximum speed, and $c_i$ is the minimum travel time.
\begin{align}
[\text{FuelOpt}] \hspace*{1cm} f_i(x) = p_i \times c_i \times \left(  \frac{c_i}{x} \right)^3, &\hspace{0.5cm} x  \in [c_i,d_i]  \label{vessel-fuel}
\end{align}

Previous works on the topic \citep{Norstad2010,Hvattum2013} assumed identical $p_i$ on all edges. Our work allows to raise this simplifying assumption, allowing to take into consideration edge-dependent factors such as currents, water depth, or wind which have a strong impact on fuel consumption. We generate uniform $p_i$ values in the interval $[0.8,1.2]$. Base travel times $c_i$ are generated with uniform distribution in $[0.7,1]$, $d_i = c_i * 1.5$, and $\alpha_i$ are generated in $[1,1.2]$.

\begingroup
\renewcommand{\arraystretch}{0.93}
\begin{table}[htbp]
\begin{small}
\begin{center}
\caption{CPU time(s) of five algorithms for NESTED, with increasing $n$ and $m=n$}
\label{detailed-n}
\scalebox{0.79}
{
\setlength{\tabcolsep}{2.7mm}
\begin{tabular}{|cc|@{\hspace*{0.15cm}} c@{\hspace*{0.15cm}}|ccccc|}
\hline
\multirow{2}{*}{\textbf{Instance}}&\multirow{2}{*}{\textbf{n}}&\multirow{2}{*}{\textbf{nb Active}}&\multicolumn{5}{c|}{\textbf{Time (s)}}\\
&&&\textbf{PS09}&\textbf{W14}&\textbf{H94}&\textbf{MOSEK}&\textbf{THIS}\\
\hline
&&&&&&& \vspace*{-0.25cm}\\
\textbf{[F]}&10&1.15&\num{0.00008859}&\num{0.00008056}&\num{0.00006176}&\num{0.0087285}&\textbf{\num{0.0000185}}\\
&10$^2$&1.04&\num{0.00795811}&\num{0.00703437}&\num{0.00067427}&\num{0.02031415}&\textbf{\num{0.00016851}}\\
&10$^3$&1.08&\num{0.91694619}&\num{0.78660605}&\num{0.00873718}&\num{9.6272}&\textbf{\num{0.0019801}}\\
&10$^4$&1.15&\num{105.8343}&\num{87.2204}&\num{0.14593618}&--&\textbf{\num{0.02234682}}\\
&10$^5$&1.20&--&--&\num{2.9307}&--&\textbf{\num{0.3671207}}\\
&10$^6$&1.10&--&--&\num{44.237}&--&\textbf{\num{4.3585}}\\
&&&&&&& \vspace*{-0.1cm}\\
\textbf{[F-Uniform]}&10&2.92&\num{0.00010316}&\num{0.00004571}&\num{0.00005855}&\num{0.00876494}&\textbf{\num{0.00002621}}\\
&10$^2$&5.06&\num{0.01369624}&\num{0.0016079}&\num{0.00074224}&\num{0.02141905}&\textbf{\num{0.00049744}}\\
&10$^3$&7.65&\num{2.28287243}&\num{0.08349505}&\num{0.00983498}&\num{8.6303}&\textbf{\num{0.00841483}}\\
&10$^4$&9.99&--&\num{6.0832}&\num{0.16711636}&--&\textbf{\num{0.13123479}}\\
&10$^5$&12.00&--&--&\num{3,99358333}&--&\textbf{\num{2,73818333}}\\
&10$^6$&14.50&--&--&\num{70.648}&--&\textbf{\num{46.21}}\\
&&&&&&& \vspace*{-0.1cm}\\
\textbf{[F-Active]}&10&3.67&\num{0.00011942}&\num{0.00003938}&\num{0.0000576}&\num{0.00870827}&\textbf{\num{0.00002884}}\\
&10$^2$&10.00&\num{0.02279989}&\num{0.00096467}&\num{0.00074972}&\num{0.02183788}&\textbf{\num{0.00046875}}\\
&10$^3$&22.58&\num{4.88150833}&\num{0.03817105}&\num{0.00992958}&\num{10.12315}&\textbf{\num{0.00681416}}\\
&10$^4$&50.75&--&\num{2.30688}&\num{0.16246152}&--&\textbf{\num{0.09945495}}\\
&10$^5$&114.50&--&\num{262.02}&\num{3.17895}&--&\textbf{\num{1.46951389}}\\
&10$^6$&280.30&--&--&\num{56.54}&--&\textbf{\num{22.138}}\\
&&&&&&& \vspace*{-0.1cm}\\
\textbf{[Crashing]}&10&6.44&\num{0.00004487}&\num{0.00001808}&\num{0.00005022}&\num{0.00945604}&\textbf{\num{0.000008}}\\
&10$^2$&24.61&\num{0.00603407}&\num{0.00070544}&\num{0.00068045}&\num{0.05951631}&\textbf{\num{0.00012454}}\\
&10$^3$&34.14&\num{1.0958892}&\num{0.04843975}&\num{0.00886037}&\num{14.28322809}&\textbf{\num{0.00247863}}\\
&10$^4$&46.90&\num{249.678}&\num{2.84608988}&\num{0.14983694}&--&\textbf{\num{0.04929541}}\\
&10$^5$&50.30&--&\num{298.146}&\num{3.44475}&--&\textbf{\num{1.12707155}}\\
&10$^6$&88.30&--&--&\num{60.211}&--&\textbf{\num{23.464}}\\
&&&&&&& \vspace*{-0.1cm}\\
\textbf{[FuelOpt]}&10&2.93&\num{0.00008462}&\num{0.00003168}&\num{0.00006623}&\num{0.00873563}&\textbf{\num{0.00002197}}\\
&10$^2$&5.31&\num{0.01216929}&\num{0.0012775}&\num{0.00079803}&\num{0.01985813}&\textbf{\num{0.00042075}}\\
&10$^3$&6.86&\num{1.73764873}&\num{0.07099825}&\num{0.01067473}&\num{7.01935}&\textbf{\num{0.00683301}}\\
&10$^4$&9.53&\num{243.3978}&\num{4.812525}&\num{0.19470302}&--&\textbf{\num{0.10156978}}\\
&10$^5$&14.90&--&\num{433.962}&\num{4.87966667}&--&\textbf{\num{1.7188631}}\\
&10$^6$&12.80&--&--&\num{85.389}&--&\textbf{\num{29.91}}\\
\hline
\end{tabular}
}
\end{center}
\end{small}
\end{table}
\endgroup 

\subsection{Experiments with $\mathbf{m = n}$}

The first set of experiments involves as many nested constraints as variables ($n=m$).
We tested the five methods for $n \in  \{10,20,50,\allowbreak100,200,\dots,10^6\}$, with 100 different problem instances for each size $n \leq 10,000$, and 10 different problem instances when $n > 10,000$. A time limit of 10 minutes per run was imposed.
The CPU time of each method for a subset of size values is reported in Table \ref{detailed-n}. The first two columns report the instance set identifier, the next column displays the average number of active constraints in the optimal solutions, and the five next columns report the average run time of each method on each set. The smallest CPU time is highlighted in boldface.
A sign ``--'' means that the time limit is attained without returning a solution.
The complete results, for all values of $n$, are also represented on a logarithmic scale in Figure \ref{figure-n}.

\begin{figure}[htbp]
\hspace*{-0.8cm}
 \begin{minipage}[c]{1.07\textwidth}
\centering
 \begin{minipage}[l]{0.49\textwidth}
\centering
\includegraphics[width=\textwidth]{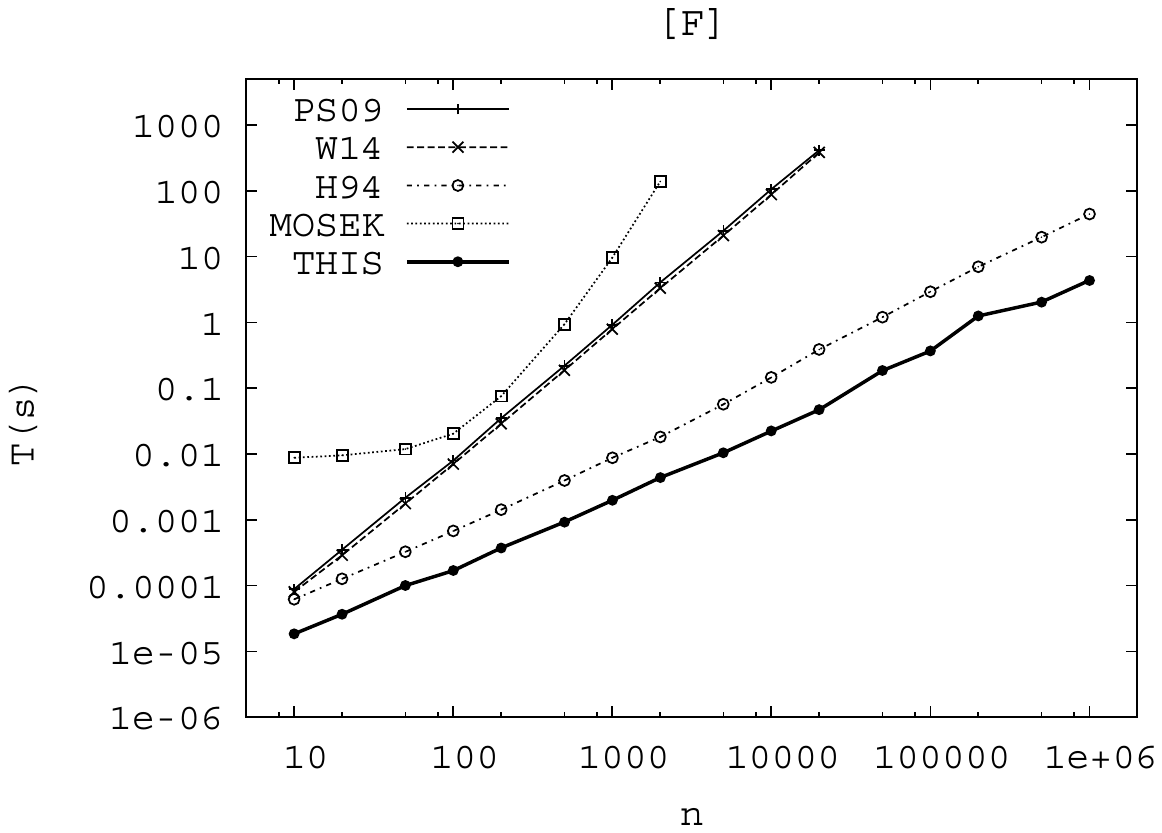}

\vspace*{0.2cm}

\includegraphics[width=\textwidth]{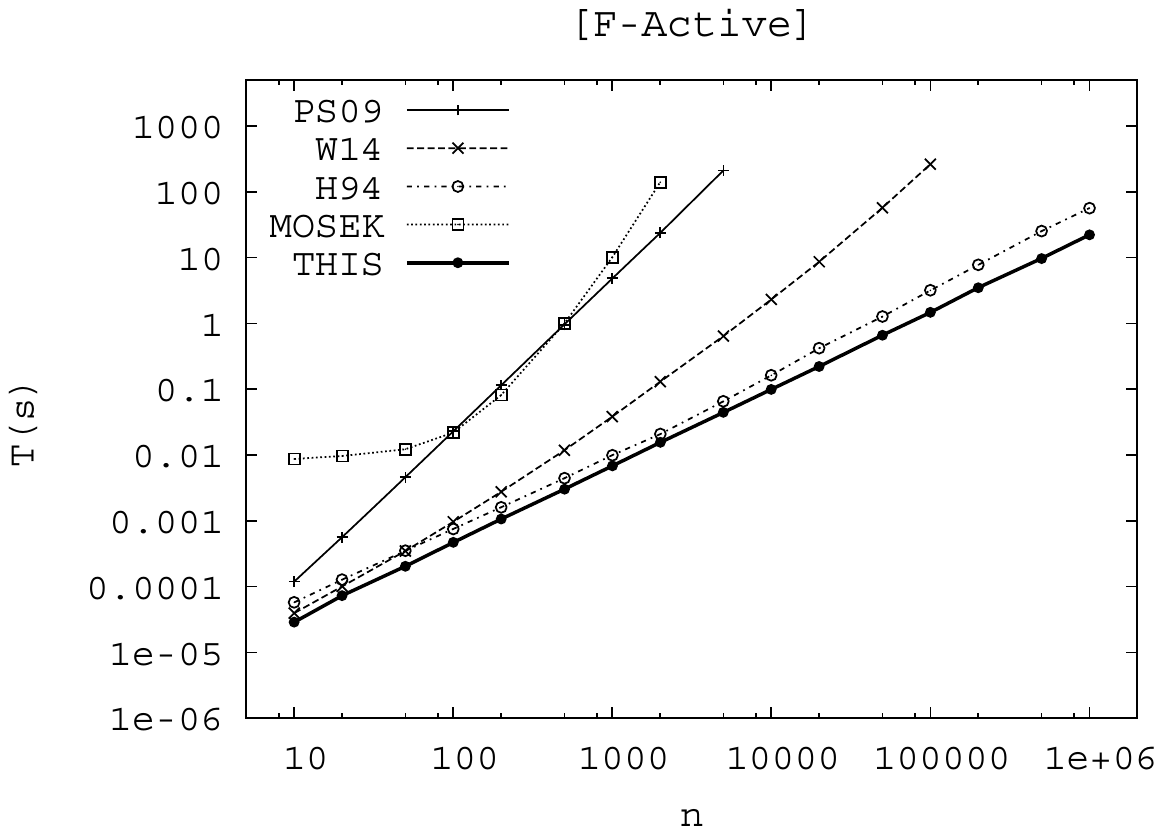}

 \end{minipage}
 \begin{minipage}[r]{0.49\textwidth}
\centering
\includegraphics[width=\textwidth]{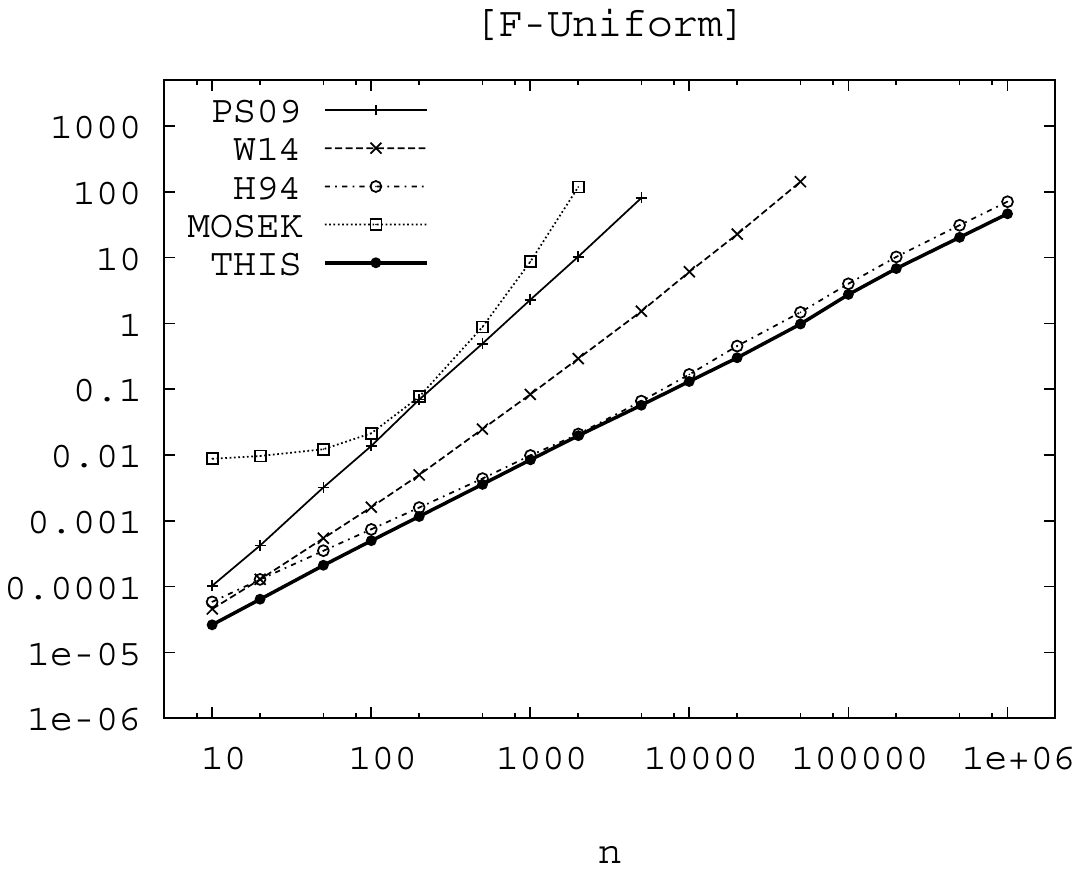}

\vspace*{0.2cm}

\includegraphics[width=\textwidth]{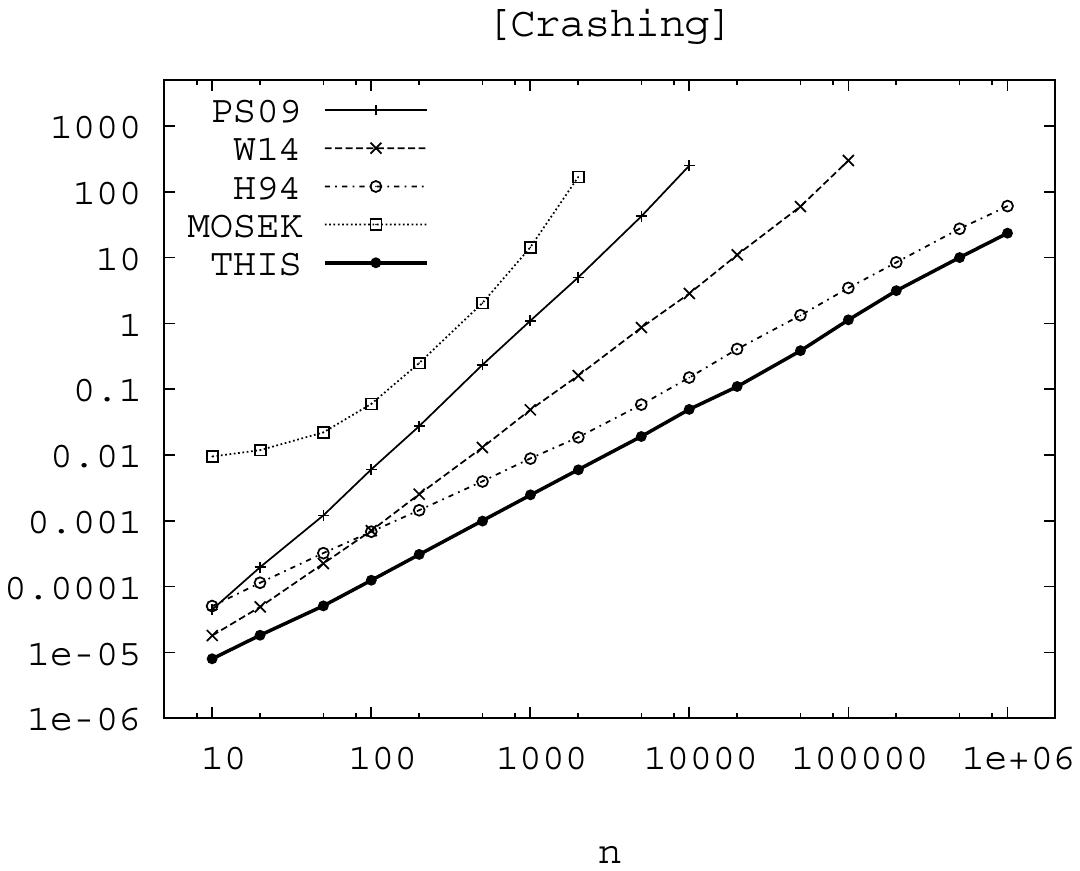}
\end{minipage} 

\vspace*{0.2cm}

\includegraphics[width=0.49\textwidth]{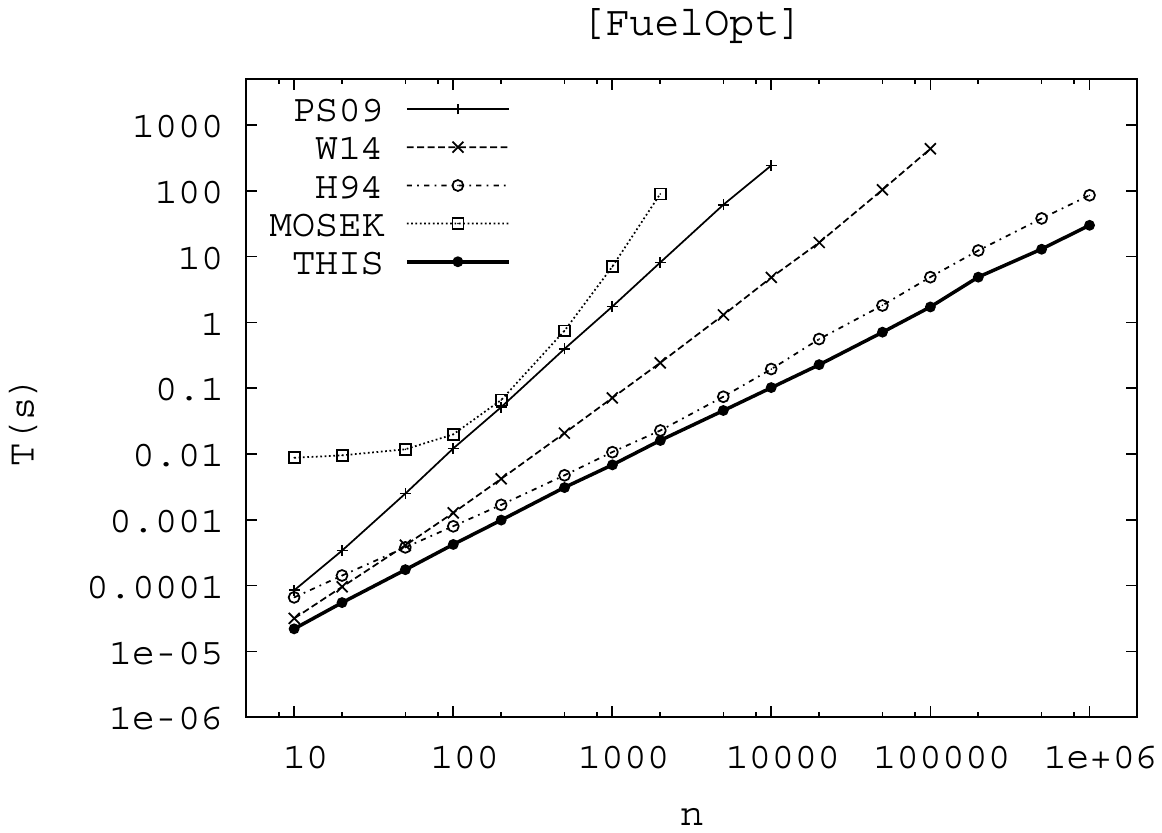}
\end{minipage} 
\caption{CPU Time(s) as a function of $n \in \{10,\dots,10^6\}$. $m=n$. Logarithmic representation}
\label{figure-n}
\end{figure}

\begin{figure}[htbp]
\hspace*{-0.8cm}
 \begin{minipage}[c]{1.07\textwidth}
\centering
 \begin{minipage}[l]{0.49\textwidth}
\centering
\includegraphics[width=\textwidth]{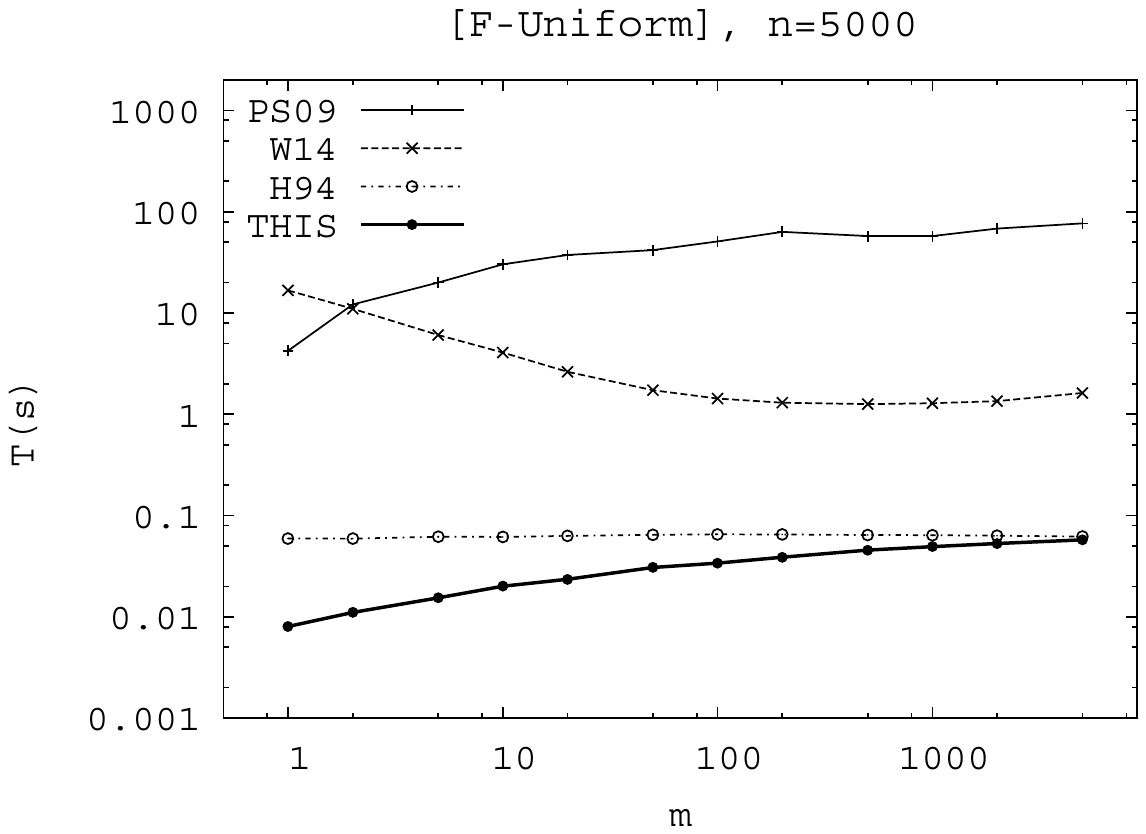}

\vspace*{0.2cm}

\includegraphics[width=\textwidth]{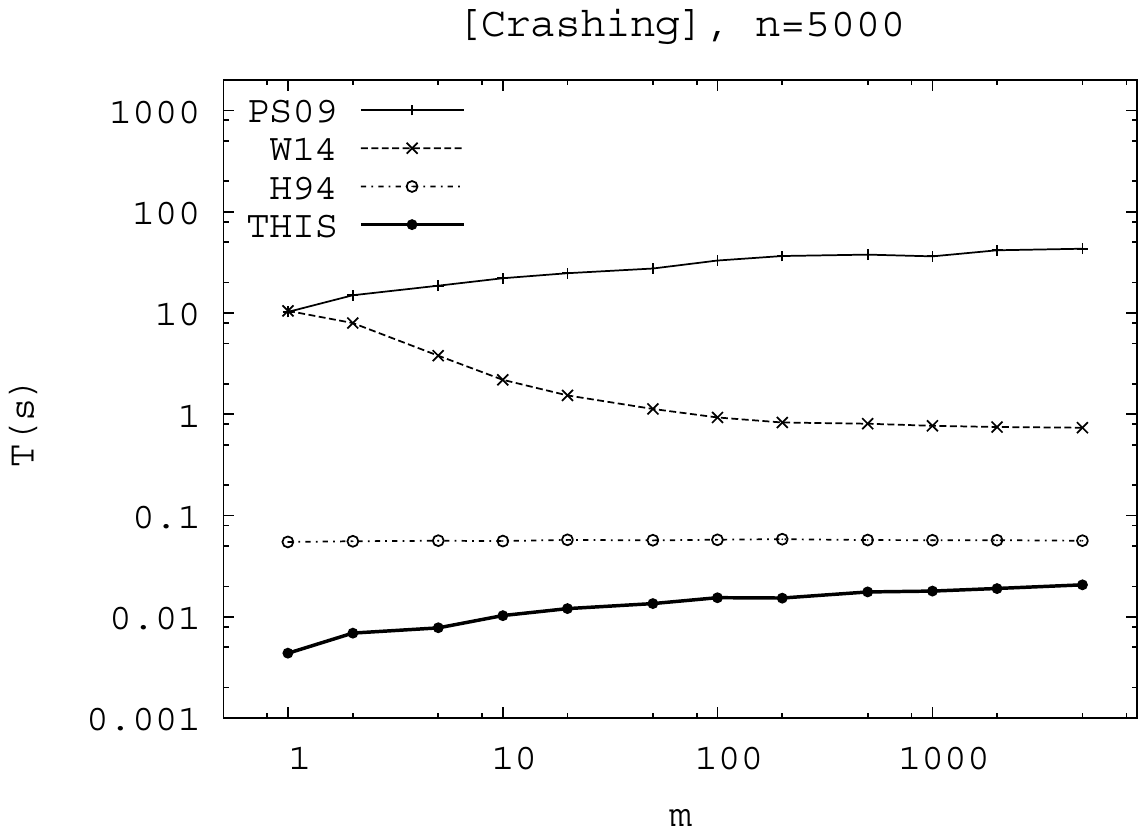}

\vspace*{0.2cm}

\includegraphics[width=\textwidth]{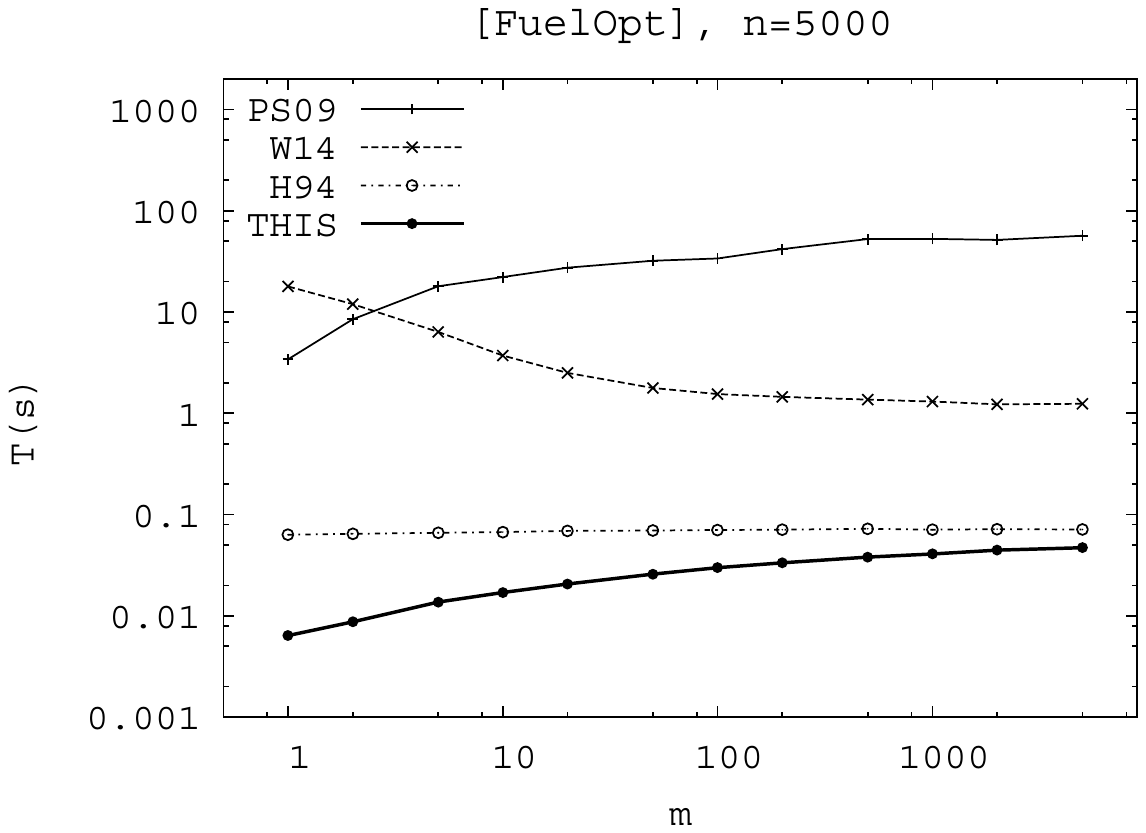}
 \end{minipage}
 \begin{minipage}[r]{0.49\textwidth}
\includegraphics[width=\textwidth]{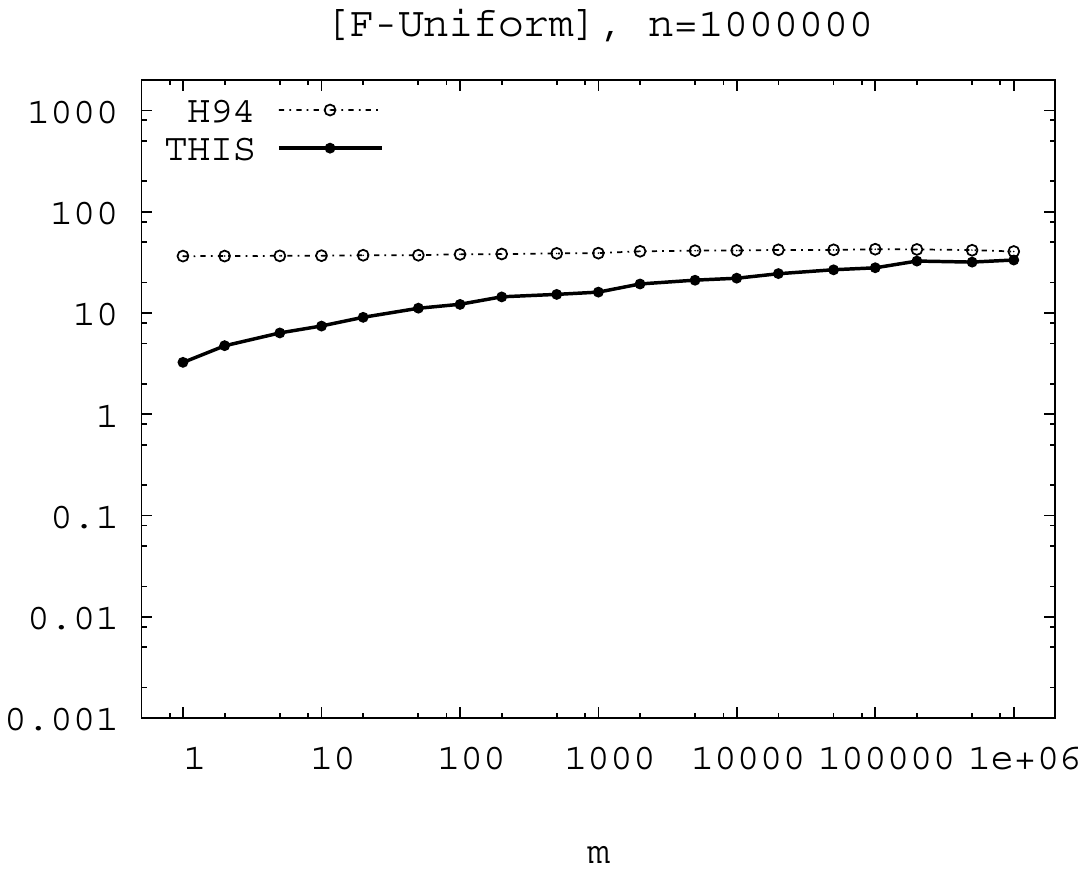}

\vspace*{0.2cm}

\includegraphics[width=\textwidth]{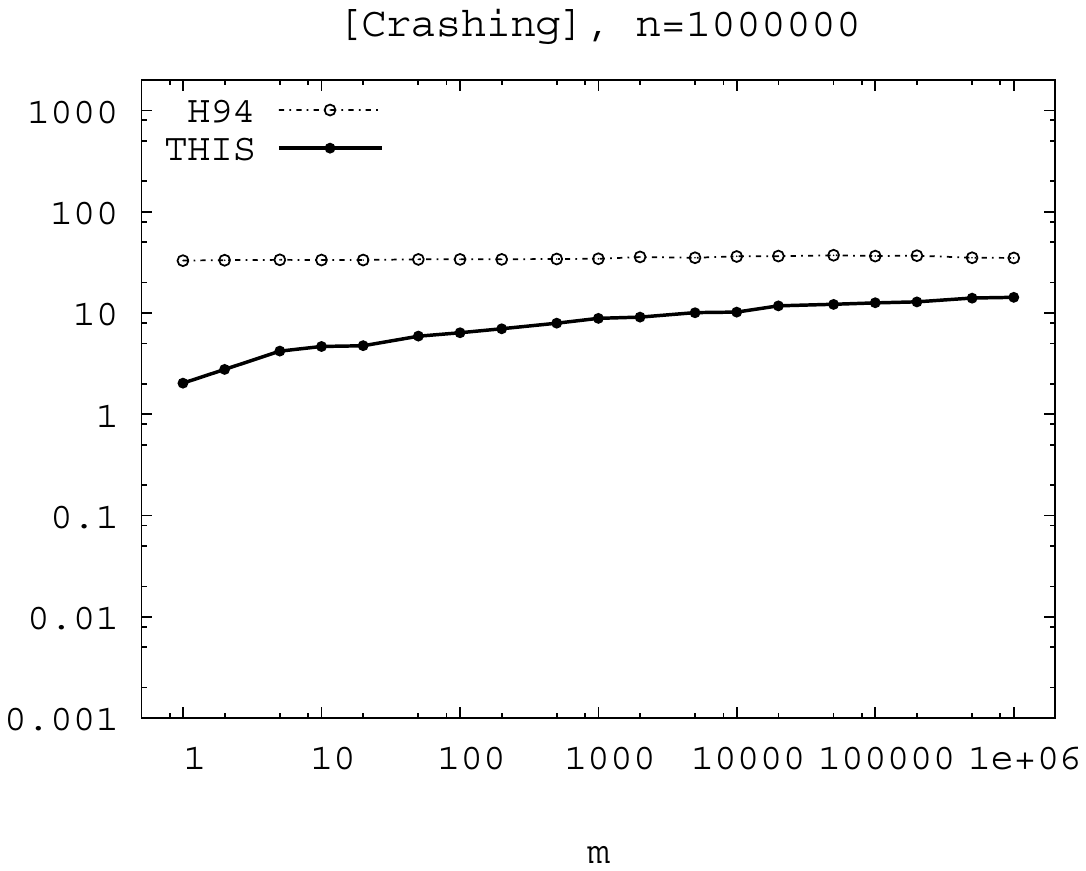}

\vspace*{0.2cm}

\includegraphics[width=\textwidth]{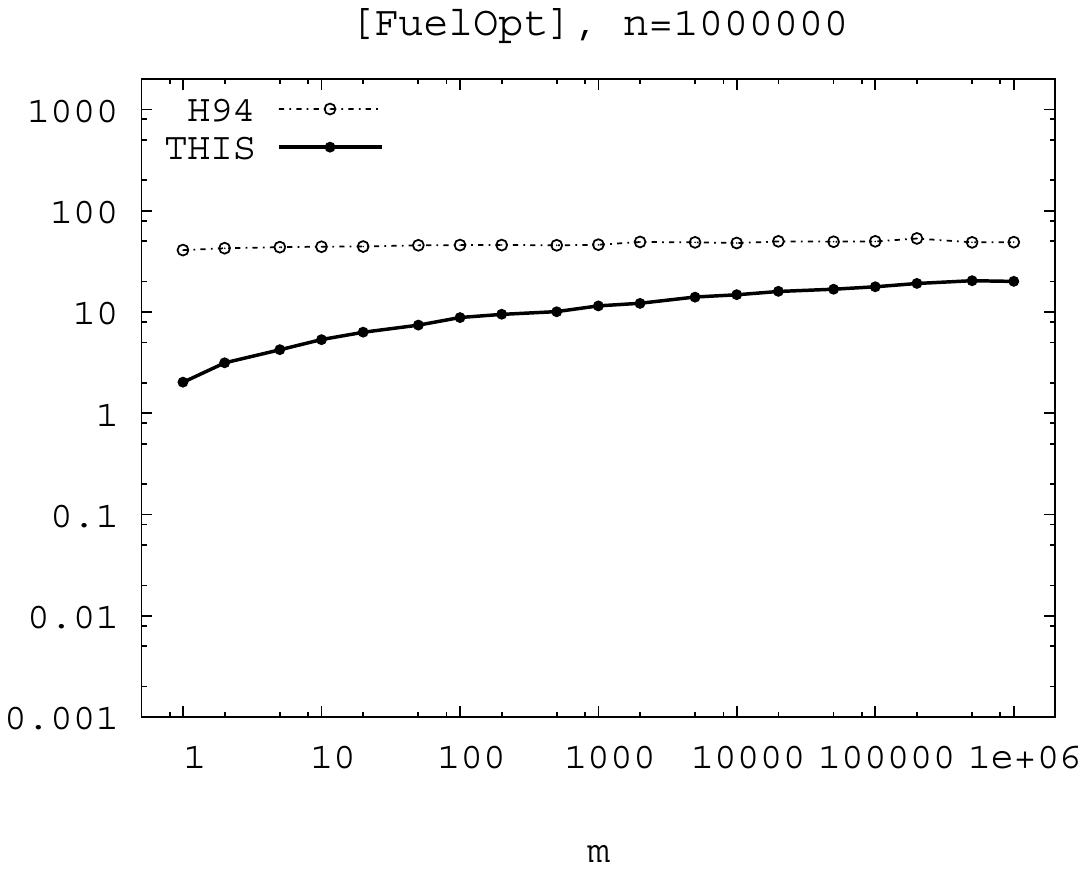}
\end{minipage} 

\vspace*{0.2cm}

\end{minipage} 
\caption{CPU Time(s) as a function of $m$. $n \in \{5000,1000000\}$. Logarithmic representation}
\label{figure-m1}
\end{figure}

First, it is remarkable that the number of active nested constraints strongly varies from one set of benchmark instances to another. One drawback of the previously-used [F] instances of \citep{Padakandla2009} is that they lead to a low number of active nested constraints, in such a way that in many cases an optimal  RAP solution obtained by relaxing all nested constraints is also the optimal NESTED solution. Some algorithms can benefit from such problem characteristics.

The five considered methods require very different CPU time to reach the optimal solution with the same precision. In all cases, the smallest time was achieved by our decomposition method.
The time taken by PS09, W14, H94 and our decomposition algorithm, as a function of $n$, is in most most cases in accordance with the theoretical complexity, cubic for PS09, quadratic for W14, and log-linear for H94 and the proposed method (Figure \ref{figure-n}). The only notable exception is problem type [F], for which the reduced number of active constraints leads to a general quadratic behavior of PS09 (instead of cubic). The CPU time of MOSEK does not exhibit a polynomial behavior on the considered problem-size range, possibly because of the preprocessing phase. 
The proposed method and H94 have a similar growth when $m=n$, but our dual-inspired decomposition algorithm performs faster in practice by a constant factor $\times$1 to $\times$10.
This can be related to the fact that our method only relies on tables, and thus hidden constants related to the use of priority-lists or union-find data structures are avoided. The bottleneck of our method (measured by means of a time profiler) is the call to the oracle for the objective function.
In H94, the call to the oracle and the management of the priority list for finding the minimum cost increment contribute equally to the largest part of the CPU time. The time taken by the Union-Find structures is not significant.

\subsection{Experiments with $\mathbf{m < n}$} 
In  a second set of experiments, the number of variables is fixed and the impact of the number of nested constraints is evaluated, with $m \in \{1,2,5,10,50,\dots,n\}$, on [F-Uniform], [Crashing] and [FuelOpt].
Two values $n=5000$ and $n=1,000,000$ were considered, to allow experiments with PS09, W14, H14 and the proposed method on medium size problems in reasonable CPU time, as well as further tests with H14 and the proposed method on large-scale instances. The CPU time as a function of $m$ is displayed in Figure~\ref{figure-m1}. 

The CPU time of H94 appears to be independent of $m$, while significant time gains can be observed for the proposed method, which is $\times$5 to $\times$20 faster than H94 on large-scale instances ($n=1,000,000$) with few nested constraints ($m=10$ or $100$). It also appears that PS09 benefits from sparser constraints. Surprisingly, sparser constraints are detrimental to W14 in practice, possibly because Equation (21) of \citep{Wang2012} is called on larger sets of variables.

\section{A note on the number of active nested constraints}
\label{active}

%
The previous experiments have shown that the number of active nested constraints in the optimal solutions tends to grow sub-linearly for the considered problems. In Table \ref{detailed-n} for example, even when $m=10^6$ the number of active nested constraints is located between $12.8$ and $88.3$ for instances with randomly generated coefficients (no ordering as in [F] or [F-Active]). To complement this observation,  we show in the following that the expected number of active nested constraints in a random optimal solution grows logarithmically with $m$ when :
\begin{enumerate}
\item $d_i = +\infty$;
\item parameters $\alpha_i$ are outcomes of i.i.d. random variables;
\item functions $f_i$ are strictly convex and differentiable;
\item and there exists a function $h$ and $\gamma_i \in \Re^{+*}$ for $i \in \{1,\dots,n\}$ satisfying $f_i(x) = \gamma_i h(x/\gamma_i)$. $\gamma_i$ are i.i.d. random variables independent from the  $\alpha_i$'s, and the vectors $(\gamma_i,\alpha_i)$ are non-colinear. \\ \vspace*{-0.1cm}
\end{enumerate}

Function shapes satisfying condition 4. are frequently encountered, e.g. in
\begin{itemize}
\item crashing: $f_i(x) =  p_i/x$ $\Rightarrow$ $h(x) = 1/x$ and  $\gamma_i = \sqrt{p_i}$;
\item fuel optimization: $f_i(x) =  p_i c_i (c_i/x)^3$ $\Rightarrow$ $h(x) = 1/x^3$ and $\gamma_i = c_i  \sqrt[4]{p_i}$;
\item any function $f_i(x) = p_i x^k$ s.t. $k \neq 1$ $\Rightarrow$  $h(x) = x^k$ and $\gamma_i = 1 / p_i^{1/(k-1)}$. \\
\end{itemize}

The first order necessary and sufficient optimality conditions of problem (\ref{RAP:1}-\ref{RAP:3}) with $x_i \in \Re^+$ for $i \in \{1,\dots,n\}$ can be written as:
\begin{small}
\begin{align}
&\mathbf{x}= (x_1,\dots,x_n) \geq \textbf{0} \text{ satisfy constraints (\ref{RAP:2}) and (\ref{RAP:3})} \label{6one} \\
&\text{for } i \in \{1,\dots,m\} \text{ and } j \in \{s[i-1]+1,\dots,s[i]-1\}, f'_j(x_j) = f'_{j+1}(x_{j+1}) \label{6two}  \\
&\text{for } i \in \{1,\dots,m-1\} \text{ and } j=s[i], \
\begin{cases}
\textbf{either } f'_{j}(x_{j}) =f'_{j+1}(x_{j+1}) \\
\textbf{or } f'_{j}(x_{j}) < f'_{j+1}(x_{j+1})  \textbf{ and }  \sum_{k=1}^{j} x_k = a_i
\end{cases}  \label{6three} 
\end{align}
\end{small}

If $f_i(x) = \gamma_i h(\frac{x}{\gamma_i})$, then $f'_i(x) = h'(\frac{x}{\gamma_i})$, and with the strict convexity the necessary and sufficient conditions (\ref{6two}) and (\ref{6three}) become:
\begin{small}
\begin{align}
&\text{for } i \in \{1,\dots,m\} \text{ and } j \in \{s[i-1]+1,\dots,s[i]-1\},  \frac{x_{s[i]}}{\gamma_{s[i]}} =   \frac{x_{s[i]+1}}{\gamma_{s[i]+1}} \tag{\ref{6two}b}   \label{6twobis}\\
&\text{for } i \in \{1,\dots,m-1\}  \text{ and } j=s[i], \
\begin{cases}
\textbf{either }  \frac{x_{j}}{\gamma_{j}} =   \frac{x_{j+1}}{\gamma_{j+1}}  \\
\textbf{or } \frac{x_{j}}{\gamma_{j}} <   \frac{x_{j+1}}{\gamma_{j+1}}  \textbf{ and }  \sum_{k=1}^{j} x_k = a_i \tag{\ref{6three}b} \label{6threebis}
\end{cases}
\end{align}
\end{small}

\begin{figure}[htbp]
\centering
\includegraphics[width=0.8\textwidth]{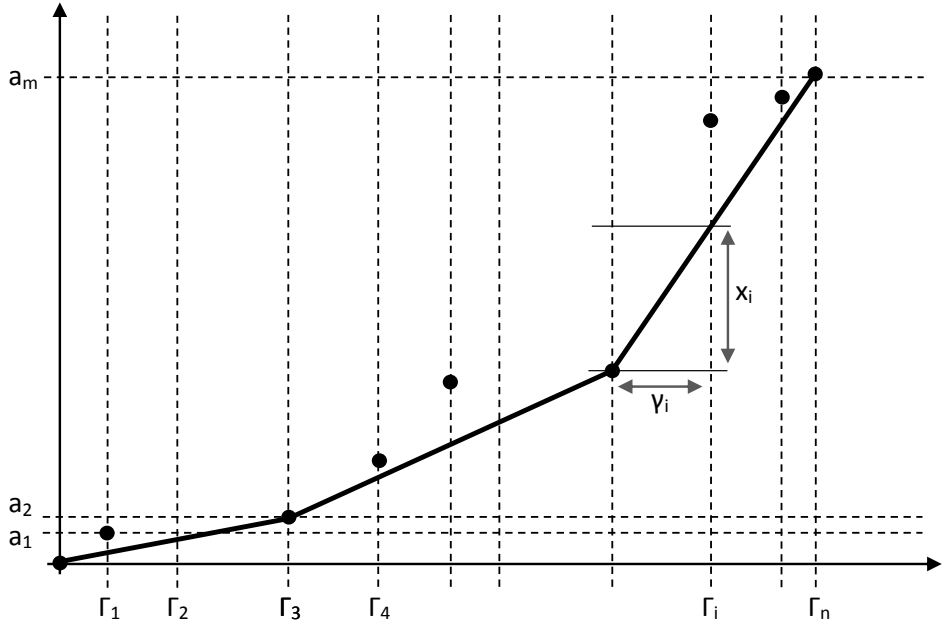}
\caption{Reduction of NESTED to a convex hull computation.
Example with $n=10$, $m=8$ and $s = (1,3,4,5,7,8,9)$.}
\label{graph-method}
\end{figure}

Define $\Gamma_i = \sum_{k=1}^{i} \gamma_k$ for $i \in \{0,\dots,n\}$.
As illustrated on Figure \ref{graph-method}, searching for a solution satisfying  (\ref{RAP:2}), (\ref{RAP:3}),  (\ref{6twobis}) and  (\ref{6threebis}) reduces to computing the convex hull of the set of points $\mathcal{P}$ such that
\begin{align}
 \mathcal{P} = \{  (\Gamma_{s[j]} , a_j) \ | \ j \in \{0,\dots,m\} \}.
\end{align}
Let $\Phi : [0,\Gamma_{n}] \rightarrow [0,B]$ be the curve associated with the lower part of the convex hull, in boldface on Figure \ref{graph-method}. Then, the solution defined as $x_i = \Phi(\Gamma_{i}) - \Phi(\Gamma_{i-1})$ for $i \in \{1,\dots,n\}$ satisfies all previously-mentioned conditions since
\begin{itemize}
\item $\Phi$ is below the points $p_j$, hence satisfying (\ref{RAP:2}); 
\item $p_m$ is part of the convex hull, thus satisfying (\ref{RAP:3});
\item $\Phi(z) \geq 0$ for $z \in [0,\Gamma_{n}]$ since all $p_j$ coordinates are non-negative, hence $\mathbf{x} \geq \textbf{0}$;
\item the slope of $\Phi$ is constant between vertices of the convex hull (\ref{6twobis});
\item and the slope of $\Phi$ only increases when meeting a vertex (\ref{6threebis}). \\
\end{itemize}

The expected number of vertices of a convex hull with random points is at the core of an extensive literature. We refer to \citep{Deltheil1920} for early studies, and \citep{Majumdar2010} for a recent review. 
Consider a randomly-generated NESTED problem, such that $\gamma_j$ for $j \in \{1,\dots,m\}$  and $\alpha_j$ for $j \in \{1,\dots,n\}$ are i.i.d. random variables. If the distribution is such that all vectors $(\gamma_j,\alpha_j)$ for $j \in \{1,\dots,m\}$ are non co-linear, then the expected number of points on the convex hull grows as $O(\log m)$ \citep{Baxter1961}. Equivalently, there are $O(\log m)$ expected active nested constraints in the solution.

Note that a generalization of the previous reasoning is necessary to fully explain the results of our experiments since we considered $d_i \neq \infty$. 
Assuming that the same result holds in this more general case, then the amortized complexity of some methods such as \citep{Padakandla2009} on randomly generated instances may be significantly better than the worst case. Indeed, this method iterates on the number of active constraints in an outer loop. The number of active constraints has no impact on the complexity and CPU time of the proposed method, but further pruning techniques may be investigated to eliminate constraints on the fly. Finally, the graphical approach used in this analysis leads to a strongly polynomial algorithm in $O(n + m \log m)$ for an interesting class of problems, and is worth further investigation on its own.

\section{Conclusions}
\label{concl}
A dual-inspired approach has been introduced for NESTED resource allocation problems. The method solves NESTED as a hierarchy of simple resource allocation problems. The best known complexity of $O(n \log n \log \frac{B}{n} )$ is attained for problems with as many nested constraints as variables, and a new best-known complexity of $O(n \log m \log \frac{B}{n} )$ is achieved for problems with $n$ variables and $\log m = o(\log n)$ nested constraints. Extensive computational experiments highlight significant CPU time gains in comparison to other state-of-the-art methods on a wide range of problem instances with up to one million tasks.

The proposed algorithm relies on different principles than the previous state-of-the-art scaled greedy method. As such, it is not bound to the same methodological limitations and may be generalized to some problem settings with non-polymatroidal constraints, e.g., allocation problems with nested upper and lower constraints, which are also related to various key applications. Further pruning techniques exploiting the reduced number of active nested constraints can be designed and the geometric approach of Section \ref{active} can be further investigated, aiming for generalization and an increased understanding of its scope of application.
Finally, promising research perspectives relate to the extension of these techniques for various application fields, such as telecommunications and image processing, which can require to solve huge problems with similar formulations.

\bibliographystyle{plain}

\end{document}